\newcommand{\eps}{\varepsilon}
\newcommand{\var}{\hbox{\rm var}\,}
\newcounter{assumption-counter}
\newtheorem{assumption}[assumption-counter]{\sc Assumption}
\title{The Geometry of Cyclical Social Trends
} 
\author{Bernard {Chazelle}}{Department of Computer Science, Princeton University, United States}{chazelle@cs.princeton.edu}{https://orcid.org/0000-0001-8542-0247}{}
\author{Kritkorn {Karntikoon}}{Department of Computer Science, Princeton University, United States}{kritkorn@princeton.edu}{https://orcid.org/0000-0002-6398-3097}{}
\author{Jakob {Nogler}}{Department of Computer Science, ETH Zurich, Switzerland}{jnogler@ethz.ch}{https://orcid.org/0009-0002-7028-2595}{}
\authorrunning{B. Chazelle, K. Karntikoon, and J. Nogler} 
\keywords{opinion dynamics, Minkowski sums, equidistribution, periodicity} 
\begin{document}

\maketitle

\begin{abstract}
We investigate the emergence of periodic behavior in opinion dynamics
and its underlying geometry. 
For this, we use a bounded-confidence model with contrarian agents
in a convolution social network. 
This means that agents adapt their opinions by interacting with
their neighbors in a time-varying social network. Being contrarian,
the agents are kept from reaching consensus. This is the key feature 
that allows the emergence of cyclical trends.
We show that the systems either converge to nonconsensual
equilibrium or are attracted to
periodic or quasi-periodic orbits. 
We bound the dimension of the attractors
and the period of cyclical trends.
We exhibit instances where each orbit is
dense and uniformly distributed within its attractor.
We also investigate the case of randomly changing
social networks. 
\end{abstract}

\section{Introduction}\label{intro}

Much of the work in the area of opinion dynamics has focused
on consensus and polarization~\cite{bernardoC, FagnaniF}.
Typical questions include:
How do agents come to agree or disagree?
How do exogenous forces drive them to consensus? How long does it take
for opinion formation to settle?  Largely left out of the discussion has been
the emergence of \emph{cyclical trends}. Periodic 
patterns in opinions and preferences is a complex,
multifactorial social phenomenon beyond the scope of this work~\cite{Lu5}.
A question worth examining, however, is whether
the process conceals deeper mathematical structure.
The purpose of this work is to show that it is, indeed, the case.

This work began with a thought experiment and a computer simulation.
The latter revealed highly unexpected behavior, which in turn compelled 
us to search for an explanation.
Our main result is a proof that adding a simple contrarian rule
to the classic bounded-confidence model suffices to produce quasi-periodic trajectories.
The model is a slight variant of the classic \emph{HK} framework:
a finite collection of agents hold opinions on several topics, which they update 
at discrete time steps by consulting
their neighbors in a (time-varying) social network. 
The modification is the addition of a simple repulsive force field that
keep agents away from tight consensus. The idea is partly inspired by
swarming dynamics. For example, birds refrain from 
flocking too closely. Likewise, near-consensus on a large enough scale tends to induce
contrarian reactions among agents~\cite{alford, heese22}.
Some political scientists have pointed to contrarianism as one of the reasons
for the closeness of some national elections~\cite{galam19, galam20}.

One of the paradoxical observations we sought to elucidate was
why cyclic trends in social networks seem oblivious to the initial
opinions of one's friends: specifically, it is not specific distributions
of initial opinions that produce oscillations but, rather,
the recurrence of certain symmetries in the networks. We prove that
the condition is sufficient (though its necessity is still open).
Another mystery
was why contrarian opinions tend to orbit toward an attractor whose
dimensionality is \emph{independent}  
of the number of opinions held by a single agent. 
These attracting sets are typically Minkowski sums of ellipses.
They emerge algorithmically and constitute a natural focus
of interest in distributed computational geometry. 

Our inquiry builds on the pioneering work
of French~\cite{french56},
DeGroot~\cite{degroot1974},
Friedkin \& Johnsen~\cite{fj1990},
and Deffuant et al.~\cite{deffuant}.
The model we use is a minor modification of the
\emph{bounded-confidence model} model~\cite{blondelHT09,hegselmanK}.
A \emph{Hegselmann-Krause (HK)} system consists of 
$n$ agents, each one represented by a point in ${\mathbb R}^d$.
The $d$ coordinates for each agent $i$ represent their current opinions
on $d$ different topics: thus, $d$ is the dimension of the opinion space.
At any (discrete) time, each agent~$i$ moves to the mass center of
the agents within a fixed distance $r_i$, 
which represents its radius of influence (Fig.~\ref{figHK}).
This step is repeated ad infinitum.
Formally, the agents are positioned at $x_1(t),\ldots, x_n(t)\in {\mathbb R}^d$ at time
$t$ and for any $t=0,1,2,\ldots \,$,
\begin{equation}\label{HK1}
x_i(t+1)= \frac{1}{| \mathcal{N}_i(t)|} \sum_{j\in \mathcal{N}_i(t)} x_j(t)\, ,
\;
\text{with } \mathcal{N}_i(t)=\Bigl\{\, 1\leq j\leq n\,:\, \big\|x_i(t)-x_j(t)\big\|_2\leq r_i\, \Bigr\}.
\end{equation}

\begin{figure}[htb]
\centering
\includegraphics[width=11cm]{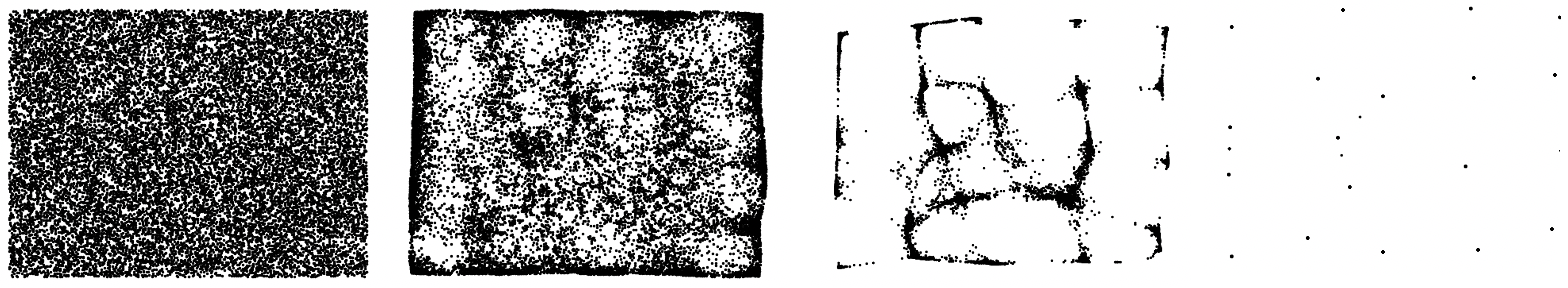}
\caption{The evolution of 20,000 random points in an \emph{HK} system.}
\label{figHK}
\end{figure}

Interpreting each $\mathcal{N}_i(t)$ as the set of neighbors of agent $i$
defines the \emph{social network} $G_t$ at time~$t$.
In the special case where all the radii of influence are equal $(r_i=R)$, convergence into
fixed-point clusters occurs within a polynomial number of steps~\cite{BBCN13, EBNT15, martinssonHK}.
Computer simulation suggests that the same remains true even when the radii differ
but a proof has remained elusive.
For cyclical trends to emerge, the social networks require a higher degree of underlying structure.
In this work, we assume vertex transitivity (via Cayley graphs), which stipulates that agents cannot
be distinguished by their local environment. Before defining the model formally in the next section,
we summarize our main findings.
\begin{itemize}
\item
Undirected networks always drive the agents
to nonconsensual convergence, ie, to fixed points at which they ``agree to disagree.''
For their behavior to become periodic or quasi-periodic,
the social networks need to be directed.
We prove that such systems either converge or are attracted to
periodic or quasi-periodic orbits. We give precise formulas for the orbits.

\item
We investigate the geometry of the attractors (Fig.~\ref{fig-attractors}).
We bound the \emph{rotation number}, which
indicates the speed at which (quasi)-periodic opinions undergo a full cycle.
We exhibit instances where each limiting orbit forms a set that is
dense and, in fact, uniformly distributed on its attractor.

\item
We explore the case of social networks changing randomly at each step.
We prove the surprising result that the dimension
of the attractor can \emph{decrease} because of the randomization.
This is a rare case where adding entropy to a system
can reduce its dimensionality. 
\end{itemize}
The dynamics of contrarian views has been studied 
before~\cite{alford, eekhoff, ferraz21, galam19, galam20, heese22, muslimKN}
but, to our knowledge, not for the purpose of explaining cyclical trends.
Our mathematical findings can be viewed as a grand generalization of the affine-invariant
evolution of planar polygons studied 
in~\cite{bruckstein95, davis94, elmachVL10, kostadinov17}.

\begin{figure}[htb]
\centering
\includegraphics[width=11cm]{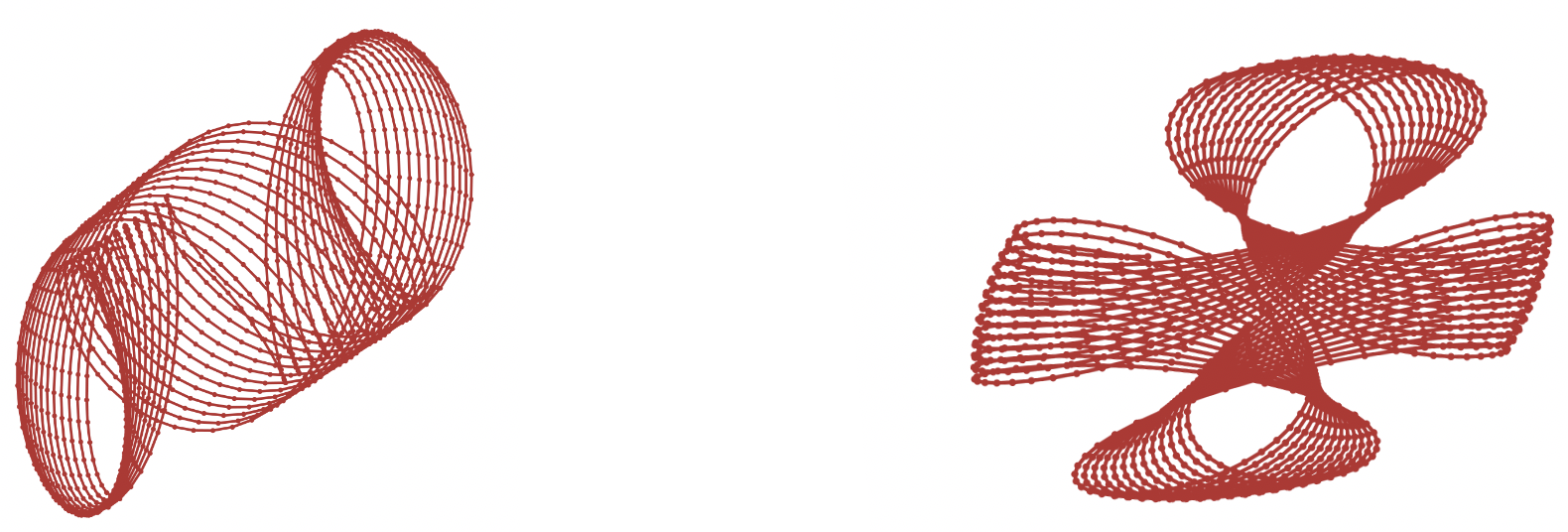}
\caption{Typical attractors.}
\label{fig-attractors}
\end{figure}

\section{Contrarian Opinion Dynamics}\label{secCOD}

The social network is a time-dependent Cayley graph over an abelian group.
All finite abelian groups are isomorphic to a direct sum of
cyclic groups $(\mathbb{Z}/n_1\mathbb{Z}) \oplus \cdots \oplus (\mathbb{Z}/n_m \mathbb{Z})$. 
For notational convenience, we set $n_i=n$.
We regard the toral grid $V=  (\mathbb{Z}/n\, \mathbb{Z})^m$ as a vector space,
and we write $N= |V|= n^m$.
Let $x_v(t)$ be the position of agent $v$ in $\mathbb{R}^d$ at time $t$.
We fix $x_v(0)$ and abbreviate it as $x_v$.
Choose $p$ such that $1/N <p< 1$
and let $(C_t)_{t\geq 0}$ be an infinite sequence of subsets of $V$.
For technical convenience, we assume that each set $C_t$ spans
the vector space $V$; hence $|C_t|\geq m$.
In the spirit of \emph{HK} systems,
we define the dynamics as follows:
for $t=0,1,\ldots,$
\begin{equation}\label{dyn}
x_v(t+1)= p x_v(t) + \frac{1-p}{|C_t|} \sum_{w\in v+C_t} x_w(t).
\end{equation}
Because of the presence of the ``self-confidence'' weight $p$, we may
assume that the \emph{convolution set} $C_t$ does not contain the origin $\mathbf{0}$.
If we view each $x_v(t)$ as a row vector in $\mathbb{R}^d$,
the update~(\ref{dyn}) specifies an $N$-by-$N$ stochastic matrix $F_{C_t}$.
Let $x(t)$ denote the $N$-by-$d$ matrix whose rows
are the $N$ agent positions $x_v(t)$, for $v\in V$.
We have $x(t+1) = F_{C_t}x(t)$.
The matrix $F_{C_t}$ may not be symmetric but it is always doubly-stochastic.
This means that the mass center $\mathbf{1}^\top x(t)/N$
is time-invariant.  Since the dynamics itself
is translation-invariant, we are free to move the mass center to the origin,
which we do by assuming $\mathbf{1}^\top x= \mathbf{0}^\top$, where $x$ denotes $x(0)$.

Obviously, some initial conditions are uninteresting: for example, $x=\mathbf{0}$. 
For this reason, we choose $x$ randomly; specifically,
each $x_v$ is picked iid from the $d-$dimensional
normal distribution $\mathcal{N}(\mathbf{0},1)$.
In the following, we use the phrase ``with high probability,'' 
to refer to an event occurring with probability at least $1-\eps$, for
any fixed $\eps>0$.  Once we've picked the matrix $x$ randomly,
we place the mass center of the agents at the origin by subtracting
its displacement from the origin:
$x\leftarrow x - \frac{1}{N}\mathbf{1}\mathbf{1}^\top x$.

The agents will be attracted
to the origin to form a single-point cluster of consensus in the limit.
Responding to their contrarian nature, the agents will restore
mutual differences by boosting the own opinions.  For that reason
we consider the scaled dynamics:
$y(0)=x$ and, for $t\geq 0$,
\begin{equation}\label{matrix-dyn}
y(t+1) = \xi_t F_{C_t}y(t),
\end{equation} 
where $\xi_t$ is chosen so that the diameter of the system remains
roughly constant. Since scaling
leaves the salient topological and geometric properties
of the dynamics unchanged, the precise definition of $\xi_t$
can vary to fit analytical (or even visual) needs.

\subsection{Preliminaries}

We define the directed graph $G_{C_t}=(V, E_t)$ at time $t\geq 0$,
where $E_t= \bigcup\, \{(v,v+c)\,|\, v\in V, c\in C_t\}$ and $N_v= \{(v,w)\,|\, w\in v+ C_t\subseteq V\}$.
For clarity, we drop the subscript $t$ for the remainder of this section;
so we write $C$ for $C_t$.

\begin{lemma}\label{span-SC}
The convolution set $C$ spans the vector space $V$ if and only if
the graph $G_C$ is strongly connected.
\end{lemma}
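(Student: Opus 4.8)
The plan is to translate both ``strong connectivity'' and ``spanning'' into a single statement about the subgroup generated by $C$, namely $\langle C\rangle = V$, and to read off the equivalence from there. The bridge between the graph-theoretic and the algebraic condition is an explicit description of reachability in the Cayley digraph $G_C$.

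First I would characterize the set of vertices reachable from a fixed vertex $u$ by directed walks. Since a directed edge leaves $u$ for $u+c$ with $c\in C$, a directed walk of length $k$ starting at $u$ ends at $u + c_{i_1}+\cdots+c_{i_k}$ with each $c_{i_j}\in C$. Hence the reachable set from $u$ is exactly $u + M$, where $M\subseteq V$ is the additive submonoid generated by $C$, that is, the set of all finite sums of elements of $C$ (including the empty sum $\mathbf{0}$). By translation symmetry of the construction, this description is uniform in $u$.

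The crux---and the step I expect to require the most care---is that $M$ is in fact the full subgroup $\langle C\rangle$, not merely a submonoid. This is precisely where finiteness of $V$ is used: every $c\in C$ has finite order $k_c$, so that $-c = (k_c-1)\,c \in M$. Thus $M$ is closed under negation, and being a submonoid closed under inverses it equals $\langle C\rangle$. (Over an infinite group this step fails, and reachability by nonnegative combinations would be genuinely stronger than membership in $\langle C\rangle$; the clean equivalence is a feature of the finite toral grid.)

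With the reachability identity $\{\text{vertices reachable from }u\} = u + \langle C\rangle$ established, both implications are immediate. If $C$ spans $V$, then $\langle C\rangle = V$, so from any $u$ one reaches $u + V = V$, i.e.\ every vertex, and $G_C$ is strongly connected. Conversely, if $G_C$ is strongly connected, then in particular every vertex is reachable from $\mathbf{0}$, so $V = \mathbf{0} + \langle C\rangle = \langle C\rangle$, meaning $C$ spans $V$. This settles both directions and completes the proof.
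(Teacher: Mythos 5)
Your proof is correct and takes essentially the same route as the paper's: both arguments identify directed walks from $u$ with finite sums of elements of $C$, so that strong connectivity becomes the statement that every difference $v-u$ is a nonnegative-integer combination of elements of $C$. The only difference is in how the directedness subtlety is handled --- you show the reachability monoid is a subgroup via the finite-order trick $-c=(k_c-1)\,c$, whereas the paper sidesteps inverses by taking the spanning coefficients $a_h\in\mathbb{Z}/n\mathbb{Z}$ as nonnegative integer representatives and reading the sum $\sum_h a_h h$ directly as a path.
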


\begin{proof}
If $C$ spans $V$, then for any pair $u,v\in V$, there exist $a_h\in \mathbb{Z}/n\mathbb{Z}$,
for each $h\in C$, such that $v-u= \sum_{h\in C} a_h h$. The right-hand side specifies
$\sum_h a_h$ edges (sum taken over $\mathbb{N}$) that form a path from $u$ to $v$ ;
therefore $G$ is strongly connected.
Conversely, assuming the latter, there is a path from $u$ to $v$: 
$(w_1,w_2), \ldots, (w_{k-1},w_k)$, with $w_1=u$ and $w_k=v$.
Thus, $v-u= \sum_i  c_i$, where $c_i= w_i - w_{i-1}\in C$; therefore
$C$ spans $V$.
\end{proof}

Our assumption about $C$ implies that each $G_{C}$ is strongly connected.
The presence of the weight $p>0$ in~(\ref{dyn})
ensures that the diagonal of $F_{C}$ is
positive. Together with the strong connnectivity assumption, 
this makes the matrix $F_{C}$ primitive, meaning that $F_{C}^k>0$, for some $k>0$.
By the Perron-Frobenius theorem~\cite{seneta06},
all the eigenvalues of $F_{C}$ lie strictly inside the unit circle in $\mathbb{C}$,
except for the dominant eigenvalue~1, which has multiplicity~$1$.
For any $u,v\in V$, we write
$\psi_u^v = \omega^{\langle u,v\rangle}$, where $\omega:= e^{2\pi i/n}$.
We define the vector $\psi^v= (\psi_u^v\,|\, u\in V)$
and easily verify that $\{\psi^v\,|\, v\in V\}$ forms an 
orthogonal eigenbasis for~$F_{C}$.
The eigenvalue $\lambda_v$ corresponding to $\psi^v$ satisfies
\[
\lambda_v \psi_u^v
= p \psi_u^v +  \frac{1-p}{|C|} \sum_{w\in  u+C} \psi_w^v
= p \psi_u^v +  \frac{1-p}{|C|} \Big( \sum_{h\in C} \psi_h^v \Big) \psi_u^v \, .
\]
We conclude:

\begin{lemma}\label{eigenvalues}
Each $v\in V$ corresponds to a distinct eigenvector
$\psi^v$, which together form an orthogonal basis for $\mathbb{C}^N$.
The corresponding eigenvalue is given by
\[
\lambda_v= p +  \frac{1-p}{|C|}  \sum_{h\in C} \omega^{\langle v, h\rangle}.
\]
\end{lemma}

We define $\lambda= \max_{v\in V}\{ |\lambda_v|<1\}$ and 
denote by $W=\{v\in V : |\lambda_v| = \lambda \}$ the set of 
subdominant eigenvectors.
The argument of $\lambda_v$ plays a key role in our discussion,
so we define~$\theta_v$ such that $\lambda_v = |\lambda_v|\omega^{\theta_v}$,
with $\theta_v\in (-n/2,n/2]$.
By~(\ref{second-eigen}), $\lambda_v\neq 0$ for $v\in W$, so $\theta_v$ is well defined.

\subsection{The evolution of opinions}\label{sec-evol}

We begin with the case of a fixed convolution set $C_t=C$.
The initial position of the agents is expressed in eigenspace as
$x= \frac{1}{N} \sum_{v\in V} \psi^v (\psi^v)^\mathrm{H} x$.
Let $z_v$ denote the row vector 
$(\psi^v)^\mathrm{H} x= \sum_{u\in V} \omega^{-\langle v,u\rangle} x_u$.
Because $(\psi^v)^\mathrm{H} x = \mathbf{1}^\top x= \mathbf{0}^\top$, for $v= \mathbf{0}\in V$,
\begin{equation}\label{x(t)=}
x(t)= \frac{1}{N} \sum_{v\in V\setminus \{\mathbf{0}\}} 
        \lambda_v^t   \psi^v z_v.
\end{equation}

\begin{lemma}\label{eigen-whp}
With high probability, for all $v\neq \mathbf{0}$,
\[
\Omega\big( \sqrt{1/N} \, \big) = \|z_v\|_2 = O\big(\sqrt{dN\log dN}\, \big).
\]
\end{lemma}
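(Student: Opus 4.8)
The plan is to reduce $\|z_v\|_2^2$ to a simple combination of chi-square variables, then use concentration for the upper bound and anti-concentration for the lower bound, closing each with a union bound over the $N-1$ nonzero frequencies.

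First I would note that the mass-centering step $x\leftarrow x-\frac1N\mathbf{1}\mathbf{1}^\top x$ is invisible to $z_v$ when $v\neq\mathbf 0$: since $\psi^{\mathbf 0}=\mathbf 1$ and the $\psi^v$ are orthogonal (Lemma~\ref{eigenvalues}), $(\psi^v)^{\mathrm H}\mathbf 1=0$, so $z_v$ equals the same expression evaluated on the original matrix of iid $\mathcal N(\mathbf 0,1)$ entries. Thus I may treat the $Nd$ scalars $x_{u,k}$ as independent standard normals. Writing $a_u=\mathrm{Re}\,\psi_u^v$ and $b_u=\mathrm{Im}\,\psi_u^v$, each coordinate of $z_v$ splits into real and imaginary parts $\sum_u a_u x_{u,k}$ and $\mp\sum_u b_u x_{u,k}$, so that
\[
\|z_v\|_2^2=\sum_{k=1}^d \mathbf{x}_k^\top M_v\,\mathbf{x}_k,\qquad M_v=aa^\top+bb^\top ,
\]
where $\mathbf{x}_k\in\mathbb R^N$ is the $k$-th column of $x$. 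The matrix $M_v$ is positive semidefinite of rank at most $2$ with $\mathrm{tr}\,M_v=\sum_u(a_u^2+b_u^2)=\sum_u|\psi_u^v|^2=N$. Diagonalizing $M_v$ in an orthonormal eigenbasis and invoking rotational invariance of the Gaussian yields the exact representation $\|z_v\|_2^2=\mu_1 A_v+\mu_2 B_v$, where $\mu_1+\mu_2=N$, $\mu_1,\mu_2\ge 0$, and $A_v,B_v$ are independent $\chi^2_d$ variables; in particular $\mathbb E\|z_v\|_2^2=dN$.

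For the upper bound, since $\max(\mu_1,\mu_2)\le N$ I have $\|z_v\|_2^2\le N(A_v+B_v)$ with $A_v+B_v\sim\chi^2_{2d}$. A standard chi-square tail estimate (Laurent--Massart), $\Pr[\chi^2_{2d}\ge 2d+2\sqrt{2d\,t}+2t]\le e^{-t}$, taken at $t=\log(N/\eps)$, gives a threshold of order $d+\log(N/\eps)$; hence $\|z_v\|_2^2\le N\cdot O(d+\log(N/\eps))=O(dN\log dN)$ fails with probability at most $\eps/N$ for each $v$. A union bound over the $N-1$ frequencies $v\neq\mathbf 0$ then leaves the event $\{\|z_v\|_2\le c\sqrt{dN\log dN}\ \forall v\}$ with probability $\ge 1-\eps$.

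For the lower bound I relabel so that $\mu_1=\max(\mu_1,\mu_2)\ge N/2$ and simply drop a term: $\|z_v\|_2^2\ge\mu_1 A_v\ge\frac N2 A_v$. Since $A_v\sim\chi^2_d$ stochastically dominates $\chi^2_1=g^2$ for a standard normal $g$, the small-ball estimate $\Pr[A_v<s]\le\Pr[g^2<s]\le\sqrt{2s/\pi}$ applies. Choosing $s=\Theta(1/N^2)$ gives $\Pr[\,\|z_v\|_2^2<c/N\,]\le\Pr[A_v<2c/N^2]=O(1/N)$ for each $v$, so a union bound over the $N-1$ frequencies keeps the failure probability below $\eps$ after tuning the constant $c$. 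This yields $\|z_v\|_2=\Omega(\sqrt{1/N})$, and combining the two events (with a fresh $\eps$) completes the proof.

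The main obstacle is calibrating the constants in the two tail estimates so that both union bounds simultaneously fit under the fixed budget $\eps$: the logarithmic slack in the upper bound is precisely what absorbs the factor $N$ from the union, while the lower bound survives only because the target $\Omega(\sqrt{1/N})$ sits far below the typical magnitude $\sqrt{dN}$, leaving a large cushion against anti-concentration. A minor point is the degenerate case where $M_v$ has rank one (one of $\mu_1,\mu_2$ vanishing) for special $v$; both arguments go through verbatim, as each uses only $\mu_1+\mu_2=N$ together with $\max(\mu_1,\mu_2)\ge N/2$.
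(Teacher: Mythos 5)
Your proof is correct, and it reaches both bounds by a route that differs from the paper's in its technical core, though the two share the same skeleton. You and the paper both begin with the same key observation: since $(\psi^v)^{\mathrm H}\mathbf 1=0$ for $v\neq\mathbf 0$, the mass-centering is invisible and $z_v$ is a function of iid Gaussians, with total variance $N$ per coordinate. From there the paper works \emph{coordinate-wise}: each entry $z_{v,j}$ is a complex Gaussian $a+ib$ with $\sigma_1^2+\sigma_2^2=N$; the upper bound comes from an erfc tail applied to each of the $d$ coordinates with a union bound over all $dN$ events, and the lower bound comes from anti-concentration applied to the single coordinate $z_{v,1}$ (one of its two real components has variance at least $N/2$). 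You instead aggregate: the exact representation $\|z_v\|_2^2=\mu_1 A_v+\mu_2 B_v$ with $\mu_1+\mu_2=N$ and independent $\chi^2_d$ variables $A_v,B_v$ is a clean structural fact not in the paper, and it lets you close the upper bound with a single Laurent--Massart tail per frequency (union over only $N$ events) and the lower bound by projecting onto the dominant eigendirection, where $\mu_1\geq N/2$ plays exactly the role of the paper's variance-$\geq N/2$ component. What each approach buys: yours yields a marginally sharper upper bound, $O\big(\sqrt{N(d+\log N)}\big)$ rather than $O\big(\sqrt{dN\log dN}\big)$, and localizes all the probability in two one-shot tail estimates; the paper's is more elementary, requiring no eigendecomposition of the rank-2 form and no chi-square machinery, and its lower bound inspects only one scalar coordinate. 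Your handling of the degenerate rank-one case and the calibration of the constant $c=\Theta(\eps^2)$ against the union bound are both sound, matching the paper's own $\eps$-dependent constant $\delta=\eps\sqrt{\pi/4N}$.
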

\begin{proof}
Let $a= (a_u)_{u\in V}$ be the first column of the matrix $x$.
For each $u\in V$, by the initialization of the system,
$a_u = \zeta_u - \delta$, where $\zeta_u\sim \mathcal{N}(0,1)$ 
and $\delta= \frac{1}{N}\mathbf{1}^\top \zeta$.
Given $v\neq \mathbf{0}$, $\psi^v$ is orthogonal to $\psi^\mathbf{0}= \mathbf{1}$;
hence
$
(\psi^v)^\mathrm{H}a = (\psi^v)^\mathrm{H} (\zeta - \delta \mathbf{1}) = (\psi^v)^\mathrm{H} \zeta
$.
Since the random vector $\zeta$ is unbiased and $|\omega^{-\langle v,u\rangle}|=1$,
it follows that
$\var \big[ (\psi^v)^\mathrm{H}a \big] =  \sum_{u\in V} \var \zeta_u = N
$.
Thus, the first coordinate $z_{v,1}$ of $z_v$ is of the form
$a+ib$, where $a$ and $b$ are sampled (not independently)
from $\mathcal{N}(0, \sigma_1^2)$ and $\mathcal{N}(0, \sigma_2^2)$,
respectively, such that $\sigma_1^2+ \sigma_2^2=N$.
Thus,  $|z_{v,1}|\leq \delta$ with probability at most $2\delta /\sqrt{\pi N}$.
Conversely, by the inequality erfc$(z)\leq e^{-z^2}$ for $z>0$, we find that 
$|z_{v,1}|=O(\sqrt{N\log (dN/\eps)}\,)$, with probability at least $1-\eps/dN$,
for any $0<\eps<1$; hence
$\|z_v\|_2= O(\sqrt{d N\log (dN/\eps)}\,)$, with probability at least $1-\eps/N$.
Setting $\delta= \eps \sqrt{\pi /4N}$ and using a union bound completes the proof.
\end{proof}

We upscale the system by setting $\xi_t= 1/\lambda$; hence $y(t+1)= y(t)/\lambda$.

\begin{theorem}\label{y(t)=thm}
Let $a_h$ and $b_h$ be the row vectors whose $u$-th coordinates ($u\in V$)
are $\cos (2\pi \langle h, u \rangle /n)$
and $\sin (2\pi \langle h, u\rangle /n)$, respectively.
With high probability, for each $v\in V$, the agent $v$ is attracted to the 
trajectory of $y_v^*(t)$, where
\begin{equation}\label{y-abx}
y_v^*(t) = \frac{1}{N}
\sum_{h\in W}
\left( \cos \frac{2\pi  (t \theta_h  + \langle h, v \rangle)}{n}  \, , \, 
           \sin  \frac{2\pi (t \theta_h +  \langle h, v \rangle)}{n} \right)
     \begin{pmatrix} 
              a_h \\ b_h 
     \end{pmatrix} x .
\end{equation}     
Let $\mu:= \max\{ |\lambda_v|/\lambda <1\}$ be the third largest (upscaled) eigenvalue, measured
in distinct moduli. The error of the approximation decays exponentially fast
as a function of $\mu$:
\[
\frac{ \|y_v^*(t)- y_v(t)\|_F }{ \|y_v(t) \|_F }
  =  O\big( \mu^t N^2 \sqrt{d\log dN}\,\big).
\]
\end{theorem}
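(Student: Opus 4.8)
The plan is to diagonalize the upscaled dynamics and read off the limit from the moduli $|\lambda_v/\lambda|$. Dividing the spectral expansion (\ref{x(t)=}) by $\lambda^t$ gives
\[
y(t)=\frac1N\sum_{v\neq\mathbf 0}\Bigl(\frac{\lambda_v}{\lambda}\Bigr)^{t}\psi^v z_v ,
\]
which I would split into the subdominant block $v\in W$, where $|\lambda_v/\lambda|=1$, and the tail $v\notin W\cup\{\mathbf 0\}$, where $|\lambda_v/\lambda|\le\mu<1$ by definition of $\mu$. The block over $W$ will become $y^*(t)$ and the tail will be the error.

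To recover (\ref{y-abx}) I would evaluate the $W$-block at agent $v$. For $h\in W$ one has $\lambda_h=\lambda\,\omega^{\theta_h}$, hence $(\lambda_h/\lambda)^t=\omega^{t\theta_h}$, and the $v$-th entry of $\psi^h$ is $\omega^{\langle v,h\rangle}$; writing $z_h=(a_h-ib_h)x$ straight from the definitions of $a_h,b_h$ and expanding $\omega^{\,t\theta_h+\langle v,h\rangle}$ with Euler's formula produces exactly the real matrix product in (\ref{y-abx}) after taking real parts. The only point to check is that discarding imaginary parts is legitimate: since $C$ is real, $W$ is closed under $h\mapsto -h$ with $\lambda_{-h}=\overline{\lambda_h}$, $z_{-h}=\overline{z_h}$ and $\psi^{-h}=\overline{\psi^h}$, so the $h$ and $-h$ terms are complex conjugates and their imaginary parts cancel in pairs; thus the $W$-block is already real and equals the stated $y_v^*(t)$.

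For the error bound I would treat numerator and denominator separately. The numerator is the tail evaluated at $v$: by the triangle inequality, the fact that all phase factors have modulus one, the bound $|\lambda_v/\lambda|^t\le\mu^t$ off $W$, the fact that there are at most $N$ such terms, and the upper bound $\|z_v\|_2=O(\sqrt{dN\log dN})$ from Lemma~\ref{eigen-whp}, the $1/N$ prefactor cancels the count of terms and yields $\|y_v(t)-y_v^*(t)\|_F=O(\mu^t\sqrt{dN\log dN})$ with high probability. Dividing by a lower bound $\|y_v(t)\|_F=\Omega(N^{-3/2})$ then produces exactly the claimed relative error $O(\mu^t N^2\sqrt{d\log dN})$ (since $\sqrt N\cdot N^{3/2}=N^2$), so everything reduces to that lower bound.

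Establishing $\|y_v(t)\|_F=\Omega(N^{-3/2})$ is where I expect the real work to be. Each subdominant mode has size $\frac1N\|z_h\|_2=\Omega(N^{-3/2})$ by the lower bound of Lemma~\ref{eigen-whp} (and is in fact typically of order $\sqrt{d/N}$, leaving ample slack), so $N^{-3/2}$ is the natural scale of $y_v^*(t)$; the issue is to rule out destructive interference in the superposition at unlucky times $t$. When $W$ is a single conjugate pair $\{h,-h\}$ the orbit is the ellipse $\frac2N\,\mathrm{Re}\bigl(\omega^{\,t\theta_h+\langle v,h\rangle}z_h\bigr)$, and it suffices to show this ellipse is nondegenerate with minor axis of order at least $N^{-3/2}$; for Gaussian $x$ the vectors $a_hx$ and $b_hx$ are almost surely linearly independent and well separated, so a mild anti-concentration estimate plus a union bound over the $N$ agents gives the bound uniformly in $t$. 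When several modes share the subdominant modulus the orbit is a Minkowski sum of such ellipses and can in principle pass closer to the origin; controlling this, again through the genericity and near-independence of the Gaussian coefficients $z_h$, is the main obstacle, and it is also the place where the equidistribution used elsewhere in the paper would enter.
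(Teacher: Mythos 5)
Your derivation of the explicit formula (\ref{y-abx}) --- dividing (\ref{x(t)=}) by $\lambda^t$, splitting off the $W$-block from the tail, writing $z_h=(a_h-ib_h)x$, and pairing $h$ with $-h$ so that imaginary parts cancel --- is exactly the paper's argument, and your per-agent numerator estimate for the tail is fine. The genuine gap is precisely where you place it yourself: the lower bound on the denominator. You set out to prove a row-wise bound $\|y_v(t)\|_F=\Omega(N^{-3/2})$ uniformly in $t$, which requires ruling out destructive interference among the subdominant modes at a fixed agent, and you leave the multi-mode case open (``the main obstacle''). Since that step is never completed, the proposal does not prove the stated error bound.

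The paper sidesteps this difficulty entirely by measuring the dominant part in the Frobenius norm of the full $N\times d$ configuration rather than agent by agent. Because the characters $\psi^h$ are mutually orthogonal with $(\psi^h)^{\mathrm{H}}\psi^h=N$, and $|\mu_h^t|=1$ for $h\in W$, all cross terms vanish under the trace and
\[
\Bigl\|\sum_{h\in W}\mu_h^t\psi^h z_h\Bigr\|_F^2
=N\sum_{h\in W}\|z_h\|_2^2=\Omega(1),
\]
\emph{independently of $t$}: there is no interference to control, and the only probabilistic input is the lower bound of Lemma~\ref{eigen-whp}. Combined with $\|\eta(t)\|_F=O(\mu^tN\sqrt{d\log dN})$, this yields the claimed ratio with a matrix-level denominator of $\Omega(1/N)$. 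Note that this also clarifies what the theorem actually establishes: the asymptotic equivalence of $y(t)$ and $y^*(t)$ as configurations, with the denominator interpreted at the matrix level. The strictly pointwise reading you adopted --- a lower bound on $\|y_v(t)\|_F$ for every individual $v$ and every $t$ --- is a harder statement and, as your own ellipse-near-the-origin concern indicates, would need additional anti-concentration assumptions. So the fix is not to push your genericity program through, but to replace the row-wise denominator with the orthogonality identity above.
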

\begin{proof}
Since the eigenvalues sum up to tr$F_C= pN$ and 1 has
multiplicity 1, we have $pN\leq 1+ (N-1)\lambda$; hence, by $p>1/N$,
\begin{equation}\label{second-eigen}
\lambda\geq \frac{pN-1}{N-1}>0.
\end{equation}
Writing $\mu_v= \lambda_v / \lambda$ and $\mu= \max\{|\mu_v|<1\}$,
we have $|\mu_v|=1$ for $v\in W$; recall that
$W=\{v\in V : |\lambda_v| = \lambda \}$.
By~(\ref{x(t)=}), it follows that
\begin{equation}\label{y(t)=zv-eta}
y(t) =  \frac{1}{N} \sum_{v\in W}
                   \mu_v^t  \psi^v  z_v +  \eta(t),
\end{equation}
where,  by Lemma~\ref{eigen-whp}, with high probability,
\begin{align*}
    {\| \eta(t) \|}_F
&= \Big\|   \frac{1}{N} 
     \sum_{v\in V\setminus (W\cup \{\mathbf{0}\})} \mu_v^t  \psi^v z_v \Big\|_F\\
&\leq  
      \frac{1}{N} \sum_{v\in V\setminus (W\cup \{\mathbf{0}\})}
              \mu^t \| \psi^v \|_2 \, \| z_v\|_2 
=    O\big( \mu^t N \sqrt{d\log dN}\,\big) .
\end{align*}
The lower bound of the lemma
implies that, for any $v\in W$,
\begin{align*}
\Big\|  \sum_{v\in W} \mu_v^t \psi^v z_v \Big\|_F^2
&= \, \mathrm{tr}\, \Big( \sum_{v\in W} \mu_v^t \psi^v z_v \Big)^\mathrm{H}
                  \Big( \sum_{v\in W} \mu_v^t \psi^v z_v \Big) 
  = \, \mathrm{tr}\, \Big\{  \sum_{v\in W} z_v^\mathrm{H} (\psi^v)^\mathrm{H} \psi^v z_v \Big\} \\
&= \,  N \cdot \mathrm{tr}\, \Big\{  \sum_{v\in W} z_v^\mathrm{H} z_v \Big\} 
              = N \sum_{v\in W} \|z_v\|_2^2 \geq \Omega ( 1 ) .
\end{align*}
For large enough $t= \Omega\big(\log (dN)/ \log (1/\mu)\big)$,
the sum in~(\ref{y(t)=zv-eta}) dominates $\eta(t)$ with high probability, while 
the latter decays exponentially fast.
Thus the dynamics $y(t)$ is asymptotically equivalent
to $y^*(t)= \frac{1}{N} \sum_{v\in W} \mu_v^t \psi^v  z_v$.
Recall that $\lambda_v = |\lambda_v|\omega^{\theta_v}$;
since, for $v\in W$, 
$\mu_v= \lambda_v/\lambda$ has modulus~1, it is equal to $\omega^{\theta_v}$.
This implies that
$
y_v^*(t) =
    \frac{1}{N} \sum_{h\in W}
         \,  \sum_{u\in V}
                 \omega^{ t \theta_h + \langle h, v-u \rangle } x_u
$.
Because $y_v^*(t)$ is real, we can ignore the imaginary part 
when expanding the expression above, which completes the proof.
\end{proof}

\subsection{Geometric investigations}

The trajectory $y_v^*(t)$ is called the \emph{limiting orbit}.\footnote{The phase space
of the dynamical system is $\mathbb{R}^{dN}$, but by abuse of notation
we use the word ``orbit'' to refer the trajectory of a single agent, which lies in $\mathbb{R}^d$.}
Theorem~\ref{y(t)=thm} indicates that, with high probability,
every orbit is attracted to its limiting form at an exponential rate,
so we may focus on the latter. Given the initial placement $x$ of the agents,
all the limiting orbits lie in the set $\mathbb{S}$, expressed
in parametric form by
\begin{equation}\label{AtElSum}
\mathbb{S} = \frac{1}{N} \sum_{h\in W}
\big\{ (a_h x) \cos X_h + (b_h x) \sin X_h \big\}.
\end{equation}
Recall that $a_h x $ and $b_h x$ are row vectors in $\mathbb{R}^d$.
The attractor $\mathbb{S}$ is
the Minkowski sum of a number of ellipses.
We examine the geometric structure $\mathbb{S}$
and explain how the limiting orbits embed into it.
To do that, we break up the sum~(\ref{y-abx}) into three parts.
Given $h\in W$, we know that $\lambda_h\neq 0$ by~(\ref{second-eigen}),
so there remain the following cases for the subdominant eigenvalues:

\begin{itemize}
\item
\emph{real} $\lambda_h>0$:
the contribution to the sum is $c_v x$, where $c_v$ is the row vector
\begin{equation}\label{cv}
c_v:= \frac{1}{N} \sum_{h\in W: \, \theta_h=0}
\Big\{\, a_h \cos  \frac{2\pi \langle h, v \rangle}{n} 
        + b_h \sin \frac{2\pi \langle h, v \rangle}{n} \, \Big\}.
\end{equation}
\item
\emph{real} $\lambda_h<0$: 
the contribution is $(-1)^t d_v x$, where, likewise, $d_v$ is the row vector
\begin{equation}\label{dv}
d_v:= \frac{1}{N} \sum_{h\in W: \, \theta_h=n/2}
\Big\{\, a_h \cos  \frac{2\pi \langle h, v \rangle}{n} 
        + b_h \sin \frac{2\pi \langle h, v \rangle}{n} \, \Big\}.
\end{equation}

\item
\emph{nonreal} $\lambda_h$: we can assume that $\theta_h>0$ since the conjugate eigenvalue
$\bar \lambda_h= \lambda_{-h}$ is also present in $W$. The contribution of an eigenvalue
is the same as that of its conjugate since $a_h=a_{-h}$ and $b_h= - b_{-h}$.
So the contribution of a given $\theta>0$ is equal to $e_{v, \theta} x$, where 
\begin{equation*}
e_{v, \theta}:= \frac{2}{N}
\sum_{h\in W: \, \theta_h = \theta}
\left\{\, a_h \cos \frac{2\pi  (t \theta  + \langle h, v \rangle)}{n} 
   +    b_h  \sin  \frac{2\pi (t \theta +  \langle h, v \rangle)}{n} \,\right\} ,
\end{equation*}
which we can expand as
$a_{v,\theta} \cos \frac{2\pi  \theta t }{n}  + b_{v,\theta} \sin \frac{2\pi  \theta t }{n}$,
where\footnote{
$R(\alpha)= \begin{pmatrix}
 \cos \alpha & -\sin \alpha
 \\
\sin \alpha &\cos\alpha
\end{pmatrix}$.} 
\begin{equation}\label{ab_v}
\begin{pmatrix}
a_{v,\theta} \\
b_{v,\theta}
\end{pmatrix}
=  \frac{2}{N}
\sum_{h\in W: \, \theta_h = \theta}
R\left( \frac{-2\pi \langle h, v \rangle}{n}\right)
\begin{pmatrix}
a_h \\
b_h
\end{pmatrix}.
\end{equation}

\end{itemize}
Putting all three contributions together, we find
\begin{equation}\label{limit-orb}
y_v^*(t) = 
c_v x
+ (-1)^t d_v x
+ \sum_{\theta\in \vartheta}
\Big\{\, a_{v,\theta} \cos \frac{2\pi \theta t}{n}  
   +   b_{v,\theta}  \sin  \frac{2\pi \theta t}{n} \, \Big\} x,
\end{equation}  
where $\vartheta$ is the set of distinct $\theta_h>0$ for $h\in W$
and all other quantities are defined in~(\ref{cv}, \ref{dv}, \ref{ab_v}). 
See Figure~\ref{trajectory} for an illustration of a doubly-elliptical orbit around
its torus-like attractor.

\begin{figure}[htb]
\centering
\includegraphics[width=12cm]{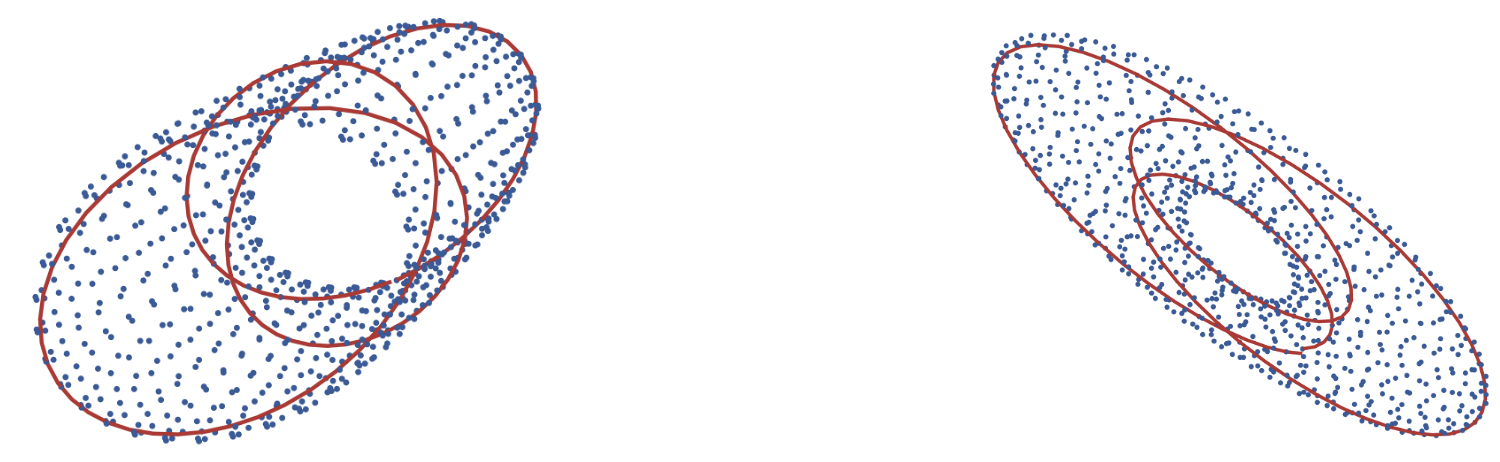}
\caption{Two orbits of a single agent around its attractor.}
\label{trajectory}
\end{figure}

\subsubsection{Generic elliptical attraction}

We prove that, for almost all values of the self-confidence weight $p$,
the set $W$ generates either a single real eigenvalue or two complex conjugate ones.
By~(\ref{limit-orb}), this shows that almost every limiting
orbit is either a single fixed point or a subset of an ellipse in $\mathbb{R}^d$.

\begin{theorem}\label{AS-Ellipse}
There exists a set $\Lambda$ of at most $\binom{N}{2}$ reals
such that the set $W$ 
is associated with either a single real eigenvalue or two complex conjugate ones,
for any $p\in (1/N,1)\setminus \Lambda$.
\end{theorem}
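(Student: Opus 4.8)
The plan is to exploit that each eigenvalue is an \emph{affine} function of the self-confidence weight~$p$. Writing $S_v := \frac{1}{|C|}\sum_{h\in C}\omega^{\langle v,h\rangle}$, which does not depend on $p$, Lemma~\ref{eigenvalues} gives $\lambda_v(p) = p + (1-p)S_v$; setting $q=1-p$ this is the segment $\lambda_v = 1 - q(1-S_v)$ running from $S_v$ (at $q=1$) to $1$ (at $q=0$). I would track the squared modulus $g_v(p) := |\lambda_v(p)|^2$, a real polynomial of degree at most $2$, and show that for all but finitely many $p$ the maximizer set $W$ cannot straddle two ``incompatible'' eigenvalues.

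Writing $S_v = \sigma_v + i\tau_v$, the identity $\lambda_v = 1 - q(1-S_v)$ expands to $g_v = 1 - 2q(1-\sigma_v) + q^2|1-S_v|^2$. Two such polynomials $g_v,g_{v'}$ coincide identically iff their linear and quadratic coefficients match, that is iff $\sigma_v=\sigma_{v'}$ and $\tau_v^2=\tau_{v'}^2$, which is exactly $S_{v'}\in\{S_v,\overline{S_v}\}$, i.e. $\lambda_{v'}\equiv\lambda_v$ or $\lambda_{v'}\equiv\overline{\lambda_v}$. I would use this to partition $V$ into groups, placing $v,v'$ together precisely when $S_{v'}\in\{S_v,\overline{S_v}\}$: within a group the moduli agree for every $p$, while across groups $g_v-g_{v'}$ is a nonzero polynomial.

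The crucial observation, and the step that yields the sharp count, is that every eigenvalue equals $1$ at $p=1$, so $q=0$ is a common root of $g_v-g_{v'}$. Hence $g_v-g_{v'}=q\cdot(\text{affine in }q)$, and on the open interval $q\in(0,1-1/N)$ each cross-group pair $\{v,v'\}$ contributes \emph{at most one} coincidence point $p$ (the spurious root $q=0$ sits at $p=1\notin(1/N,1)$). Letting $\Lambda$ be the union of these points over all $\binom{N}{2}$ pairs $\{v,v'\}\subseteq V$ gives $|\Lambda|\le\binom{N}{2}$.

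Finally, for $p\notin\Lambda$ any two $v,v'\in W$ satisfy $g_v(p)=g_{v'}(p)=\lambda^2$, so they must lie in the same group; thus $\{\lambda_v : v\in W\}\subseteq\{\mu,\overline\mu\}$ for a single $\mu$. Since $v\in W$ forces $-v\in W$ and $\lambda_{-v}=\overline{\lambda_v}$, this set is either a single real value (when $\mu$ is real) or a genuine complex-conjugate pair, which is the claim and, via~(\ref{limit-orb}), pins the generic orbit to a point or a single ellipse. I expect the only real subtlety to be the bookkeeping behind the common factor $q$; overlooking it would merely double the bound to $2\binom{N}{2}$, so the remainder is routine.
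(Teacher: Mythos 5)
Your proof is correct, and its counting skeleton matches the paper's: compare eigenvalue moduli pairwise, show that each pair either has equal moduli for \emph{all} $p$ (which forces equal or conjugate eigenvalues) or for at most \emph{one} $p$, and let $\Lambda$ collect the at most $\binom{N}{2}$ exceptional values. Where you genuinely diverge is the mechanism behind this dichotomy. The paper derives it from a Euclidean lemma (Lemma~\ref{triangleLemma}): writing $\lambda_u=p\cdot 1+(1-p)S_u$ and $\lambda_v=p\cdot 1+(1-p)S_v$, equal moduli place the origin on the perpendicular bisector of the segment $\lambda_u\lambda_v$, yielding the identity $\big(2\Re(S_v-S_u)+|S_u|^2-|S_v|^2\big)p=|S_u|^2-|S_v|^2$, which is degenerate exactly when $S_u\in\{S_v,\overline{S_v}\}$ and otherwise pins down $p$ uniquely. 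You instead expand $g_v=|\lambda_v|^2=1-2q(1-\sigma_v)+q^2|1-S_v|^2$ in $q=1-p$ and factor out the common root $q=0$ (all eigenvalues equal $1$ at $p=1$), so each cross-group difference is $q$ times a nonzero affine function and has at most one root in $(0,1-1/N)$. These are the same identity in different clothing --- the triangle lemma is precisely the expansion of $|\lambda_u|^2=|\lambda_v|^2$ with the factor $1-p$ divided out geometrically --- but your version is self-contained and makes the degree count (hence the $\binom{N}{2}$ bound) transparent, whereas the paper's lemma is a reusable tool that it invokes again in the proof of Theorem~\ref{p-vs-speed-thm}. Two small points in your favor: you index $\Lambda$ by all pairs of $V$, which is cleaner than the paper's ``$u,v\in W$'' (the set $W$ itself depends on $p$), and your closing observation that $v\in W$ forces $-v\in W$ with $\lambda_{-v}=\overline{\lambda_v}$ is exactly what is needed to rule out the case of a single non-real eigenvalue, a step the paper leaves implicit.
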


The system is called \emph{regular} if $p\in (1/N,1)\setminus \Lambda$.
For such a system,
either (i) $\vartheta = \{\theta\}$ and $c_v=d_v= \mathbf{0}$,
or (ii) $\vartheta = \emptyset$ and exactly one of $c_v$ or $d_v$ equals $\mathbf{0}$.
In other words, by~(\ref{limit-orb}), we have three cases depending
on the subdominant eigenvalues:
\begin{equation}\label{reg-limit-orb}
y_v^*(t) = 
\begin{cases}
\,\, c_v x & : \text{\small \emph{real positive}}   \\
\,\, (-1)^t d_v x  & : \text{\small \emph{real negative}}   \\
\,\, 
\big(\, a_{v,\theta} \cos \frac{2\pi \theta t}{n}  
   +   b_{v,\theta}  \sin  \frac{2\pi \theta t}{n} \, \big) x
       & : \text{\small \emph{conjugate pair}.}   \\
\end{cases}
\end{equation}  

\begin{lemma}\label{triangleLemma}
Consider a triangle $abc$ and let $e= pc +(1-p)a$
and $f= pc +(1-p)b$. Let $O$ be the origin
and assume that the segments $Oe$ and $Of$
are of the same length (Fig.~\ref{fig-triangleProof}); then the identity
$
|a|^2 - |b|^2 = \frac{2p}{1-p} (b-a)\cdot c
$
holds.
\end{lemma}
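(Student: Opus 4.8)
The plan is to convert the geometric hypothesis that $Oe$ and $Of$ have equal length into the purely algebraic statement $|e|^2 = |f|^2$, and then expand both sides using the convex-combination definitions $e = pc + (1-p)a$ and $f = pc + (1-p)b$. Treating $a,b,c$ as vectors and writing $|\cdot|$ for the Euclidean norm, I would first record
\[
|e|^2 = |pc + (1-p)a|^2 = p^2|c|^2 + 2p(1-p)\, a\cdot c + (1-p)^2|a|^2,
\]
and, by the same expansion with $a$ replaced by $b$,
\[
|f|^2 = |pc + (1-p)b|^2 = p^2|c|^2 + 2p(1-p)\, b\cdot c + (1-p)^2|b|^2.
\]

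The key observation is that the term $p^2|c|^2$ is common to both expansions and therefore cancels when we impose $|e|^2 = |f|^2$. Subtracting the two identities leaves
\[
(1-p)^2\big(|a|^2 - |b|^2\big) = 2p(1-p)\,(b-a)\cdot c,
\]
after collecting the dot-product terms on one side and the squared-norm terms on the other.

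Finally, I would divide through by $(1-p)^2$ to obtain the claimed identity $|a|^2 - |b|^2 = \frac{2p}{1-p}(b-a)\cdot c$. There is no genuine obstacle here: the whole argument is a direct expansion of two squared norms followed by a cancellation. The only point meriting a moment's care is the legitimacy of dividing by $(1-p)$, which is guaranteed because $p\in(1/N,1)$ forces $p\neq 1$; this same constraint also makes the factor $\frac{2p}{1-p}$ well defined.
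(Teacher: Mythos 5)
Your proof is correct: expanding $|e|^2=|f|^2$, cancelling the common $p^2|c|^2$ term, and dividing by $(1-p)^2$ (legitimate since $p<1$) yields exactly the claimed identity. The paper reaches the same identity by a slightly different, more geometric route: it takes the midpoint $d=\frac{1}{2}(e+f)$ of the segment $ef$, observes that $Od$ lies on the perpendicular bisector of $ef$ and is therefore orthogonal to $ef$ (hence to $ab$, since $e-f=(1-p)(a-b)$), and expands the single orthogonality relation $\bigl(2pc+(1-p)(a+b)\bigr)\cdot(b-a)=0$. Algebraically, the paper is exploiting the factorization $|e|^2-|f|^2=(e+f)\cdot(e-f)$, which is why its computation never produces the cross terms $p^2|c|^2$ and $2p(1-p)\,a\cdot c$ that you must carry and cancel; your version is a fully expanded form of the same cancellation. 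The trade-off is minor: your argument is purely mechanical and needs no picture, while the paper's is shorter and makes the geometric content (equal lengths $\Leftrightarrow$ $O$ on the perpendicular bisector) explicit, which it then reuses in the proof of Theorem~\ref{p-vs-speed-thm}.
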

\begin{proof}
Let $d:= \frac{1}{2} (e+f)$ be the midpoint of $ef$. The segment $Od$ lies
on the perpendicular bisector of $ef$, so it is orthogonal to
$ef$; hence to $ab$. Thus, $d\cdot (b-a)=0$. 
Since $d=\frac{1}{2}( 2pc +(1-p)a + (1-p)b)$, 
the lemma follows from
$(2pc +(1-p)(a+b))\cdot (b-a)=0$.
\end{proof}

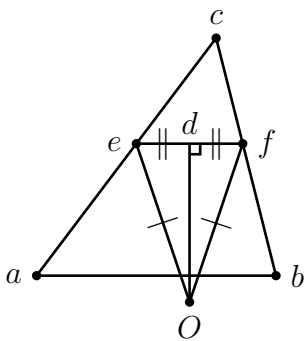
\begin{figure}[htb]
\centering
\usetikzlibrary{positioning}

\begin{tikzpicture}[line width=1pt, scale=0.7]
    \def\circlesize{2.5pt}
    \def\nodefontsize{\Large}

    \fill (0,0) circle (\circlesize) node[below, inner sep=5pt, font=\nodefontsize] {$O$};
    \fill (-1,3) circle (\circlesize) node[left, inner sep=5pt, font=\nodefontsize] {$e$};
    \fill (1,3) circle (\circlesize) node[right, inner sep=5pt, font=\nodefontsize] {$f$};
    \draw (0,0) -- node[sloped, font=\nodefontsize] {$\mid$} (1,3);
    \draw (1,3) -- node[sloped, font=\nodefontsize] {$\parallel$} (0, 3) node[above, font=\nodefontsize] {$d$} -- node[sloped, font=\nodefontsize] {$\parallel$} (-1,3);
    \draw (0,0) -- node[sloped, font=\nodefontsize] {$\mid$} (-1, 3);
    \draw (0, 0) -- (0, 3);
    \draw (0,2.8)-|(0.2,3);

    \fill (0.5,5) circle (\circlesize) node[above, inner sep=5pt, font=\nodefontsize] {$c$};
    \fill (1.625,0.5) circle (\circlesize) node[right, inner sep=5pt, font=\nodefontsize] {$b$};
    \fill (-2.875,0.5) circle (\circlesize) node[left, inner sep=5pt, font=\nodefontsize] {$a$};
    \draw (0.5,5) -- (1.625,0.5) -- (-2.875,0.5) -- (0.5,5);
\end{tikzpicture}
\caption{A triangle identity.}
\label{fig-triangleProof}
\end{figure}

\begin{proof}[Proof of \cref{AS-Ellipse}]
Pick two distinct $u,v\in W$.
Applying Lemma~\ref{triangleLemma} in the complex plane,
we set: 
$a= \frac{1}{|C|} \sum_{h\in C}\omega^{\langle u,h\rangle}$;
$b= \frac{1}{|C|} \sum_{h\in C}\omega^{\langle v,h\rangle}$;
and $c=1$; thus $e= \lambda_u$ and $f= \lambda_v$,
which implies that the segments $Oe$ and $Of$
are of the same length. Abusing notation by treating $a,b,c$ as both vectors
and complex numbers, we have $(b-a)\cdot c= \Re (b-a)$; therefore,
\[
\big(2 \Re (b-a) +|a|^2 - |b|^2\big)p  = |a|^2 - |b|^2.
\]
\begin{enumerate}
\item
If $2 \Re (b-a) +|a|^2 - |b|^2 =0$, then $|a|= |b|$, which in turn
implies that $\Re (b-a) =0$;  hence $a=\bar b$
and $\lambda_u= \bar \lambda_v$.
\item
If $2 \Re (b-a) +|a|^2 - |b|^2 \neq 0$, then
$p$ is unique: $p= p_{u,v}$.
\end{enumerate}

\noindent
We form $\Lambda$ by including all of the numbers $p_{u,v}$, with $u,v\in W$.
\end{proof}

In some cases, regularity holds with no need to exclude values of $p$:

\begin{theorem}\label{Cbasis}
If $C$ forms a basis of $V$ and $n$ is prime,
then $|W|=2m$ and $W$ produces exactly two eigenvalues:
$p + \frac{1-p}{m}(\omega -1)$ and its conjugate.
\end{theorem}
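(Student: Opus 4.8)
The plan is to diagonalize via \cref{eigenvalues} and reduce the claim to a finite extremal problem over $\mathbb{F}_n^m$. Since $C$ is a basis we have $|C|=m$, and since $n$ is prime $V=(\mathbb{Z}/n\mathbb{Z})^m$ is an $\mathbb{F}_n$-vector space on which the dot product $\langle\cdot,\cdot\rangle$ is nondegenerate. Writing $C=\{c_1,\dots,c_m\}$, the linear map $v\mapsto(\langle v,c_1\rangle,\dots,\langle v,c_m\rangle)$ is injective (the $c_j$ span $V$) and hence a bijection $V\to\mathbb{F}_n^m$. Thus, as $v$ runs over $V$, the tuple $k:=(\langle v,c_1\rangle,\dots,\langle v,c_m\rangle)$ runs over $\mathbb{F}_n^m$ exactly once, and by \cref{eigenvalues} the eigenvalue attached to $v$ is $\lambda(k)=p+\frac{1-p}{m}\sum_{j=1}^m\omega^{k_j}$, of modulus $1$ only at $k=\mathbf 0$. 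Determining $W$ therefore amounts to maximizing $|\lambda(k)|$ over $k\neq\mathbf 0$.

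I would then rewrite $\lambda(k)=1+\frac{1-p}{m}\,\Delta(k)$ with $\Delta(k)=\sum_j(\omega^{k_j}-1)$, and single out the \emph{single-deviation} tuples: one coordinate equal to $1$ or $n-1$, the rest $0$. There are exactly $2m$ of these, and since $\omega^{n-1}=\bar\omega$ they realize two conjugate eigenvalues, namely $1+\frac{1-p}{m}(\omega-1)$ and its conjugate. For a lone deviation of residue $a$, the relation $|\omega^a-1|^2=-2\,\Re(\omega^a-1)$ collapses the modulus to $|\lambda|^2=1-2\beta(1-\beta)\bigl(1-\cos\tfrac{2\pi a}{n}\bigr)$, where $\beta:=\frac{1-p}{m}\in(0,1)$; this is maximized precisely at $a\in\{1,n-1\}$, pinning down the two values (and forcing $n$ odd for the conjugate pair to be genuinely distinct).

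The crux — and the step I expect to be hardest — is to show these $2m$ tuples dominate every tuple with $r\ge2$ nonzero coordinates. Writing $x:=\Re\,\Delta$, the per-coordinate identity $\sin^2\phi_j=-2x_j-x_j^2$ (with $x_j:=\cos\phi_j-1$) together with two applications of Cauchy–Schwarz gives $|\Delta|^2\le 2r\,|x|$, while $|x|=\sum_j(1-\cos\phi_j)\ge r\bigl(1-\cos\tfrac{2\pi}{n}\bigr)$. Feeding these into $|\lambda|^2=1+2\beta x+\beta^2|\Delta|^2$ reduces the comparison to the scalar inequality $r(1-\beta r)>1-\beta$, which for $r\ge2$ holds iff $\beta<\frac{1}{r+1}$, i.e. iff $\beta=\frac{1-p}{m}<\frac{1}{m+1}$, that is $p>\frac{1}{m+1}$. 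This threshold is sharp: the bound is \emph{tight} for the all-equal tuple $(a,\dots,a)$, whose eigenvalue $p+(1-p)\omega$ satisfies $|\lambda|^2=1-2p(1-p)\bigl(1-\cos\tfrac{2\pi}{n}\bigr)$ and exactly ties the single-deviation value at $p=\frac{1}{m+1}$. Hence the clean conclusion $|W|=2m$ lives in the high-self-confidence regime $p>\frac{1}{m+1}$; below it the all-equal vectors overtake and $W$ collapses to $\{(1,\dots,1),(n-1,\dots,n-1)\}$, so I would read the statement with that range of $p$ in force.
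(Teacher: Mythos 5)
Your first half retraces the paper's own proof exactly: since $n$ is prime and $C$ is a basis, the map $v \mapsto (\langle v, c_1\rangle,\dots,\langle v, c_m\rangle)$ is a bijection of $V$ onto $(\mathbb{Z}/n\mathbb{Z})^m$, and the $2m$ one-hot exponent tuples yield a single conjugate pair of eigenvalues. (Your value $1+\frac{1-p}{m}(\omega-1)$ is the correct one; the ``$p+\frac{1-p}{m}(\omega-1)$'' in the statement is a typo, since the paper's own proof derives $p+\frac{1-p}{m}(m-1+\omega)$, which equals your expression and matches the $\lambda$ used later in Theorem~\ref{surprise1}.)

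The crux---that these $2m$ vectors are precisely the subdominant set $W$---is exactly where the paper's proof stops: it asserts ``simple examination shows that these are precisely the subdominant eigenvalues'' with no argument. You carried out that examination and found it is conditional, and you are right: the conclusion fails for $p\leq \frac{1}{m+1}$ (with $m\geq 2$), a range the model permits, since it only assumes $p>1/N=n^{-m}$. Your Cauchy--Schwarz estimate $|\Delta|^2\le 2r|x|$, the reduction to $r(1-r\beta)>1-\beta$, and the tightness witness are all correct: the all-equal tuple $(1,\dots,1)$ carries the eigenvalue $p+(1-p)\omega$, and $|p+(1-p)\omega| > \big|1+\frac{1-p}{m}(\omega-1)\big|$ precisely when $p<\frac{1}{m+1}$. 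Concretely, take $m=2$, $n=5$, $p=0.1\in(1/N,1)$: then $|p+(1-p)\omega|\approx 0.936$ while $\big|1+\frac{1-p}{2}(\omega-1)\big|\approx 0.811$, so $W$ consists of (the preimages of) $\pm(1,1)$ rather than the $2m$ one-hot vectors, and at $p=\frac{1}{m+1}$ the tie forces $|W|>2m$. So the defect here is not in your argument but in the paper: Theorem~\ref{Cbasis} needs the hypothesis $p>\frac{1}{m+1}$ (your proof establishes it under that hypothesis), and the same unstated restriction is needed in the paper's worked example with $C=\{(1,0),(0,1)\}$, where $W=\{\pm(1,0),\pm(0,1)\}$ is likewise claimed for all $p$. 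Your parenthetical that $n$ must be odd for the conjugate pair to be distinct (so that $|W|=2m$) is a further correct, if minor, caveat, since $n=2$ is prime.
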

\begin{proof}
By Lemma~\ref{eigenvalues}, 
$\lambda_v= p +  \frac{1-p}{|C|}  \sum_{h\in C} \omega^{\langle v, h\rangle}$.
Fix nonzero $v\in V$. 
Because $n$ is prime and 
the vectors $h_1,\ldots, h_m$ from $C$ form a basis over the field $\mathbb{Z}/n \mathbb{Z}$,
the $m$-by-$m$ matrix whose $i$-th row is $h_i$ is nonsingular.
This implies that, in the sum $\sum_{h\in C} \omega^{\langle v, h\rangle}$, the
exponent sequence $(1,0,\ldots, 0)$ appears for exactly one value $v\in V$
and the same is true of $(-1,0,\ldots, 0)$. 
This holds true 
of any one-hot vector with a single $\pm 1$ at any place and $0$ elsewhere.
This means that, for $2m$ values of $v$,
the eigenvalue $\lambda_v$ is of the form $p +  \frac{1-p}{m}  (m-1+\omega)$
or its conjugate.
Simple examination shows that these are precisely the subdominant eigenvalues.
\end{proof}

\subsubsection{The case of cycle convolutions}

It is useful to consider the case of a single cycle: $m=1$.
For convenience, we momentarily
assume that $n$ is prime and that $\sum_{h\in C} h\neq 0 \pmod{n}$;
both assumptions will be dropped in subsequent sections.

\begin{lemma}\label{Wstructure}
Each eigenvalue $\lambda_v$ is simple.
\end{lemma}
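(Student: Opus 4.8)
The plan is to prove the equivalent statement that the $n$ eigenvalues $\lambda_0,\dots,\lambda_{n-1}$ are pairwise distinct; since Lemma~\ref{eigenvalues} already supplies a full eigenbasis $\{\psi^v\}_{v\in V}$ with $F_C\psi^v=\lambda_v\psi^v$, distinctness of the $\lambda_v$ is equivalent to every eigenvalue being simple. With $m=1$ we have $V=\mathbb{Z}/n\mathbb{Z}$, so by Lemma~\ref{eigenvalues} it suffices to show that the exponential sum $S(v):=\sum_{h\in C}\omega^{vh}$ takes a distinct value for each $v\in V$. I would first dispose of the dominant index $v=\mathbf{0}$, for which $S(\mathbf{0})=|C|$: for any $v\neq\mathbf{0}$ the triangle inequality gives $|S(v)|\le|C|$, with equality only if every $\omega^{vh}=1$, which is impossible since $n$ is prime, $v\neq\mathbf{0}$ and $\mathbf{0}\notin C$. (This also follows from the Perron--Frobenius discussion above.) So $\lambda_{\mathbf{0}}$ is separated from the rest, and it remains to treat two nonzero indices.

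The heart of the argument is the case $S(u)=S(v)$ with $u,v$ nonzero. I would rewrite the two sums over the \emph{dilates} $uC=\{uh\bmod n:h\in C\}$ and $vC$, both of size $|C|$ because multiplication by a nonzero element is a bijection of $\mathbb{Z}/n\mathbb{Z}$. The hypothesis then reads $\sum_{k\in uC}\omega^k=\sum_{k\in vC}\omega^k$, i.e.\ $\sum_{j=0}^{n-1}c_j\,\omega^j=0$ with integer coefficients $c_j$ equal to the membership of $j$ in $uC$ minus its membership in $vC$ (each $0$ or $\pm1$). Here I would invoke the key number-theoretic fact that, because $n$ is prime, the $n$-th cyclotomic polynomial equals $1+X+\cdots+X^{n-1}$, so the \emph{only} $\mathbb{Q}$-linear dependency among $1,\omega,\dots,\omega^{n-1}$ is the all-ones relation. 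Hence all $c_j$ share a common value $c$; summing over $j$ yields $nc=|uC|-|vC|=0$, so $c=0$ and therefore $uC=vC$ as sets.

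It then remains to upgrade the set equality $uC=vC$ to $u=v$. Setting $a:=u^{-1}v\in(\mathbb{Z}/n\mathbb{Z})^{\times}$ (the inverse exists since $n$ is prime), the equality becomes $aC=C$, so reindexing the bijection $h\mapsto ah$ gives $\sum_{h\in C}h=\sum_{h\in C}ah=a\sum_{h\in C}h$, that is $(a-1)\sum_{h\in C}h\equiv0\pmod n$. The standing hypothesis $\sum_{h\in C}h\not\equiv0\pmod n$ makes this factor invertible modulo the prime $n$, forcing $a\equiv1$ and hence $u=v$, which completes the proof.

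I expect the main obstacle to be the middle step, namely converting the analytic equality $S(u)=S(v)$ into the combinatorial identity $uC=vC$. This is precisely where primality of $n$ is indispensable: for composite $n$ the roots of unity satisfy extra vanishing relations (sums over cosets of proper subgroups), the conclusion ``$c_j$ constant'' fails, and genuinely different dilates of $C$ can yield equal exponential sums. By contrast, the reduction to distinctness and the final descent from $aC=C$ to $a=1$ are short, relying only on the bijectivity of multiplication modulo a prime and the nonvanishing hypothesis on $\sum_{h\in C}h$.
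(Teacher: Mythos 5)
Your proof is correct and follows essentially the same route as the paper: both arguments use the fact that for prime $n$ the minimal polynomial of $\omega$ is $1+z+\cdots+z^{n-1}$ (you phrase this as the all-ones relation being the unique $\mathbb{Q}$-linear dependency, the paper as a divisibility statement) to force $uC=vC$ from $S(u)=S(v)$, and then sum over $C$ and invoke $\sum_{h\in C}h\not\equiv 0 \pmod{n}$ to conclude $u=v$. The only differences are cosmetic (triangle inequality for the $v=\mathbf{0}$ case, and passing through $a=u^{-1}v$ rather than a permutation $\sigma$ of $C$).
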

\begin{proof}
Because $n$ is prime, the cyclotomic polynomial for $\omega$
is $\Phi(z)= z^{n-1}+ z^{n-2}+\cdots + z+1$.
It is the minimal polynomial for $\omega$, which is unique.
Note that $\langle v, h\rangle= vh$, since $m=1$.
Given $v\in V$, we define the polynomial 
$g_v(z)=  \sum_{h\in C} z^{vh}$ in 
the quotient ring of rational polynomials $\mathbb{Q}[z]/(z^n-1)$. Sorting the summands
by degree modulo $n$, we have 
$g_v(z)= \sum_{k=0}^{n-1} q_{v,k} z^k$,
for nonnegative integers $q_{v,k}$, where $\sum_k q_{v,k}=|C|$.
If $\lambda_v=\lambda_u$, for some $u\in V$,
then, by Lemma~\ref{eigenvalues}, $g_v(\omega)= g_u(\omega)$; hence $\Phi$ divides $g_v-g_u$.
Because the latter is of degree at most $n-1$, it is either
identically zero or equal to $\Phi$
up to a rational factor $r\neq 0$. In the second case,   
\[
(q_{v,n-1}- q_{u,n-1}) z^{n-1}+\cdots 
      + (q_{v,1}- q_{u,1}) z + q_{v,0}- q_{u,0} = r  \Phi.
\]
This implies that $q_{v,k}- q_{u,k}= r \neq 0$, for all $0\leq k <n$,
which contradicts the fact that $\sum_k q_{v,k} = \sum_k q_{u,k} = |C|$;
therefore, $g_v=g_u$.

\begin{enumerate}
\item
If $v=0$, then $g_v(z)=|C|$; hence $g_u(z)=|C|$ and $u=0$, ie, $v=u$.

\item
If $v\neq 0$, then let $S_v= \{ \omega^{vh}\, |\,  h\in C \}$.
Because $\mathbb{Z}/n\mathbb{Z}$ is a field,
the $|C|$ roots of unity in $S_v$ are distinct;
hence $q_{v,k}\in \{0,1\}$. It follows that $S_v= S_u$ and 
$|S_v|=|S_u|=|C|$;
therefore, for some permutation $\sigma$ of order $|C|$,
we have $vh = u \sigma(h)$, for all $h\in C$.
Summing up both sides over $h\in C$ gives us
$v\sum_{h\in C} h = u\sum_{h\in C} h \pmod{n}$; hence $v=u$,
since $\sum_{h\in C} h\neq 0 \pmod{n}$. \qedhere
\end{enumerate}
\end{proof}
By~(\ref{reg-limit-orb}),
the limiting orbit is of the form $y_v^*(t) =  c_v x$ or
$y_v^*(t)= (-1)^t d_v x$ if the subdominant eigenvalue is real.
Otherwise, the orbit of an agent approaches a single ellipse in $\mathbb{R}^d$:
for some $\theta>0$,
$
y_v^*(t) =
\left(\, a_{v,\theta} \cos \frac{2\pi \theta t}{n}  
   +   b_{v,\theta}  \sin  \frac{2\pi \theta t}{n} \, \right) x
$.

\subsubsection{Opinion velocities}

Assume that the system is regular, so 
$W$ is associated with
either a single real eigenvalue or two complex conjugate ones.
If $\vartheta= \emptyset$, by~(\ref{limit-orb}), 
every agent converges to a fixed point of the attractor $\mathbb{S}$ or its
limiting orbit has a period of 2. 
The other case $\vartheta= \{\theta\}$ is more interesting. The agent
approaches its limiting orbit, which is periodic or quasi-periodic.
The \emph{rotation number}, $\alpha:= \theta/n$,
is the (average) fraction of a whole rotation covered in a single step. 
It measures the speed at which the agent cycles around its orbit.
We prove a lower bound on that speed.\footnote{Its upper bound is $1/2$.}

\begin{theorem}\label{speeds}
The rotation number $\alpha$ of a regular system satisfies
$
\alpha \geq  \frac{1-p}{n} \left(\frac{1}{2N} \right)^N
$.
\end{theorem}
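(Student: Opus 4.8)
The plan is to reduce the claimed bound on $\alpha$ to a lower bound on the imaginary part of a subdominant eigenvalue, and then to control that imaginary part by exploiting the fact that it is, up to scaling, a \emph{nonzero algebraic integer} in the cyclotomic field $\mathbb{Q}(\omega)$. The rotation number can be made extremely small but never collapses to zero unless the eigenvalue is exactly real, and quantifying that gap is where the real work lies.

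First, since the system is regular and we are in the interesting case $\vartheta=\{\theta\}$, there is a nonreal subdominant eigenvalue $\lambda_v=\lambda\,\omega^{\theta_v}$ with $\theta_v=\theta>0$ and $\alpha=\theta/n=\arg(\lambda_v)/(2\pi)\in(0,1/2]$. Because $\lambda=|\lambda_v|<1$ and $\sin s\le s$ for $s\ge 0$, writing $\Im(\lambda_v)=\lambda\sin(2\pi\alpha)$ yields $\Im(\lambda_v)\le 2\pi\alpha$, hence $\alpha\ge \Im(\lambda_v)/(2\pi)$. So it suffices to bound $\Im(\lambda_v)$ from below. By Lemma~\ref{eigenvalues}, $\Im(\lambda_v)=\frac{1-p}{|C|}\,\Im(S)$ with $S=\sum_{h\in C}\omega^{\langle v,h\rangle}$; the eigenvalue being nonreal is exactly the statement $\Im(S)\ne 0$.

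Next, I consider $\beta:=S-\bar S=2i\,\Im(S)$, a nonzero element of the ring of integers $\mathbb{Z}[\omega]$ of $\mathbb{Q}(\omega)$. The key step is a field-norm estimate: the norm $\mathrm{Nm}(\beta)=\prod_{\gcd(j,n)=1}\sigma_j(\beta)$, taken over the $\phi(n)$ Galois automorphisms $\sigma_j:\omega\mapsto\omega^{j}$, is a nonzero rational integer, so $|\mathrm{Nm}(\beta)|\ge 1$. Each conjugate $\sigma_j(\beta)=\sum_{h\in C}\bigl(\omega^{j\langle v,h\rangle}-\omega^{-j\langle v,h\rangle}\bigr)$ has modulus at most $2|C|$; isolating the factor $|\beta|$ (the $j=1$ term) gives $1\le |\beta|\,(2|C|)^{\phi(n)-1}$, i.e. $|\Im(S)|=|\beta|/2\ge \frac{1}{2}(2|C|)^{1-\phi(n)}$. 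Substituting back yields $\Im(\lambda_v)\ge (1-p)(2|C|)^{-\phi(n)}$, and therefore $\alpha\ge \frac{1-p}{2\pi}(2|C|)^{-\phi(n)}$.

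Finally, I discharge the stated bound using the crude estimates $|C|<N$ (the set $C$ omits the origin) and $\phi(n)\le n-1\le N-1$: these give $\alpha\ge \frac{1-p}{2\pi}(2N)^{-(N-1)}=\frac{(1-p)N}{\pi}(2N)^{-N}\ge \frac{1-p}{n}(2N)^{-N}$, where the last inequality uses $nN\ge 4>\pi$. I expect the main obstacle to be precisely the lower bound on $\Im(S)$: it is a sum of sines of rational angles that can be astronomically small yet nonzero, so no elementary trigonometric argument suffices. Recognizing $S-\bar S$ as a nonzero cyclotomic integer and bounding it through the product of its Galois conjugates is the crux; the rest of the argument is routine trigonometric and counting bookkeeping.
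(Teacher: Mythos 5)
Your proof is correct, and it takes a genuinely different route from the paper's. Both arguments make the same reduction: the rotation number is bounded below by $\Im(\lambda_v)/(2\pi)$, which via Lemma~\ref{eigenvalues} comes down to a lower bound on the nonzero purely imaginary quantity $\beta=\sum_{h\in C}\bigl(\omega^{\langle v,h\rangle}-\omega^{-\langle v,h\rangle}\bigr)$, and both exploit integrality: a nonzero integer-valued product with boundedly many bounded factors forces $|\beta|$ away from zero. The paper realizes $\beta$ as an eigenvalue of an $N\times N$ matrix $A$ with entries in $\{-1,0,1\}$ (via $\beta\,\psi^v=A\psi^v$), notes that the product of the nonzero eigenvalues of $A$ is a nonzero integer coefficient of its characteristic polynomial, and bounds the other eigenvalues by the Frobenius norm, yielding $|\beta|\geq (2N|C|)^{-(r-1)/2}$ with $r=\mathrm{rank}\,A\leq N$. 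You instead treat $\beta$ as a nonzero algebraic integer in $\mathbb{Q}(\omega)$ and use its field norm: the product of the $\phi(n)$ Galois conjugates is a nonzero rational integer, and each conjugate has modulus at most $2|C|$, giving $|\beta|\geq (2|C|)^{1-\phi(n)}$. Your approach buys a quantitatively stronger conclusion: the exponent is $\phi(n)-1\leq n-2$, not of order $N=n^m$, so your bound $\alpha\geq\frac{1-p}{2\pi}(2|C|)^{-\phi(n)}$ is exponentially better than the stated $\frac{1-p}{n}(2N)^{-N}$ whenever $m\geq 2$ (and, as you check, it implies the stated bound). What the paper's argument buys in exchange is self-containment: it needs only elementary linear algebra (characteristic polynomials and Frobenius norms), whereas yours invokes Galois conjugates and norms of cyclotomic integers --- though, as a small simplification, you do not need the fact that $\mathbb{Z}[\omega]$ is the full ring of integers of $\mathbb{Q}(\omega)$; it suffices that $\beta$, being a sum of roots of unity, is an algebraic integer, so its norm is a nonzero rational integer.
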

\begin{proof}
Of course, this assumes that $\vartheta\neq \emptyset$.
Fix $v\in V$ and let $\beta_v = \sum_{w\in C} 
      \big(  \omega^{\langle v, w\rangle} - \omega^{-\langle v, w\rangle} \bigr)$;
further, assume that $\beta_v$ is nonzero, hence imaginary.
We have $\beta_v\, \psi_u^v= g_u^\top \psi^v$, where $g_u$ is a vector in $\{-1,0,1\}^{N}$.
It follows that $\beta_v \, \psi^v= A \psi^v$, for an $N$-by-$N$ matrix $A$ whose nonzero elements
are $\pm 1$ and whose rows are given by $g_u^\top$.
Thus, $\beta_v$ is an imaginary eigenvalue of $A$; hence a complex root
of the characteristic polynomial
$\det( A - \gamma \mathbb{I})$. Let $r\geq 1$ be the rank of $A$ and let 
$\gamma_1,\ldots, \gamma_r$ be its nonzero eigenvalues.
Expansion of the determinant gives us a sum of monomials 
of the form $b_i \gamma^{l_i}$, for $1\leq i\leq 2^N N!$,
where $b_i\in \{-1,0,1\}$. The subset of them given by $l_i=N-r$
add up to the product of the nonzero eigenvalues (times $\pm \gamma^{N-r}$);
hence $\prod_{i=1}^r |\gamma_i| \geq 1$.
Label $\gamma_1$ the nonzero eigenvalue of smallest modulus. 
The sum of the squared moduli of the eigenvalues of a matrix is at most
the square of its Frobenius norm; hence
no eigenvalue of $A$ can have a modulus larger than $\sqrt{2N |C|}$ and, therefore
\begin{equation}\label{speedLB}
|\beta_v|\geq 
|\gamma_1|  = \frac{ \prod_{i=1}^r |\gamma_i| }{ \prod_{i=2}^r  |\gamma_i|}
\geq \left(\frac{1}{2 N |C|} \right)^{\frac{r-1}{2}}.
\end{equation}
Since $h\in W$, it follows from~(\ref{second-eigen}) that $0< \lambda= |\lambda_h|<1$.
Thus,
\[
|\theta_h|\geq
|\sin\theta_h|= \frac{|\Im(\lambda_h)|}{|\lambda_h|}
\geq \frac{1-p}{|C|} \, 
           \Big| \Im \Big\{\,  \sum_{w\in C} \omega^{\langle h, w\rangle} \, \Big\} \Big|
\geq \frac{1-p}{2|C|} \, |\beta_h| \, .
\]
With $\lambda_h$ assumed to be nonreal, 
Lemma~\ref{eigenvalues} implies that so is $\beta_h$; hence $\beta_h\neq 0$. 
Applying~(\ref{speedLB}) completes the proof.
\end{proof}

Our next result formalizes the intuitive fact
that self-confidence slows down motion.
Self-assured agents are reluctant to change opinions.

\begin{theorem}\label{p-vs-speed-thm}
The rotation number of a regular system cannot increase with $p$. 
\end{theorem}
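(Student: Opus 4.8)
The plan is to reduce the statement to tracking the argument of the subdominant eigenvalue as the self-confidence weight varies, and to show this argument never increases. By Lemma~\ref{eigenvalues} each eigenvalue is an affine function of $p$, namely $\lambda_v(p)= p + (1-p)S_v$ with $S_v:=\frac{1}{|C|}\sum_{h\in C}\omega^{\langle v,h\rangle}$; so as $p$ runs from $0$ to $1$ the point $\lambda_v(p)$ slides along the line segment joining $S_v$ to the common limit $1$. Writing the rotation number as $\alpha(p)=\theta_h/n=\frac1{2\pi}\arg\lambda_h(p)$, where $h\in W$ is the positive-argument representative of the subdominant pair, the starting observation is simply that $\frac{d}{dp}\lambda_v = 1-S_v$, so everything comes down to computing the angular velocity of a point moving along a straight segment together with bookkeeping about which eigenvalue is subdominant.

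First I would handle a regime in which the subdominant pair does not change, i.e. an interval between two consecutive excluded values of $\Lambda$. There I would differentiate the argument directly,
\[
\frac{d}{dp}\arg\lambda_v \;=\; \frac{\Im\big(\overline{\lambda_v}\,(1-S_v)\big)}{|\lambda_v|^2}\;=\;\frac{-\,\Im (S_v)}{|\lambda_v|^2},
\]
the last equality being a one-line expansion using $\lambda_v = p+(1-p)S_v$. Since the positive-argument representative has $\Im(\lambda_v)>0$, hence $\Im(S_v)>0$, this derivative is strictly negative: while the identity of the subdominant eigenvalue is fixed, $\alpha$ strictly decreases. This is just the elementary fact that the angular coordinate of a point travelling along a line that misses the origin is monotone.

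The delicate point, and the step I expect to be the main obstacle, is what happens at a value $p^\ast\in\Lambda$ where the subdominant pair switches, say from $\pm u$ (just below $p^\ast$) to $\pm v$ (just above). There $|\lambda_u(p^\ast)|=|\lambda_v(p^\ast)|=\lambda$, and because $|\lambda_u|$ is overtaken by $|\lambda_v|$ the modulus difference passes from positive to negative, forcing $\frac{d}{dp}|\lambda_u|^2\le \frac{d}{dp}|\lambda_v|^2$ at $p^\ast$. The key manipulation is to rewrite $1-S_v=(1-\lambda_v)/(1-p)$, which yields
\[
\frac{d}{dp}|\lambda_v|^2 \;=\; 2\,\Re\big(\overline{\lambda_v}\,(1-S_v)\big)\;=\;\frac{2}{1-p}\big(\Re(\lambda_v)-|\lambda_v|^2\big);
\]
at $p^\ast$ the term $|\lambda_v|^2=\lambda^2$ is common to $u$ and $v$, so the crossing inequality collapses to the purely geometric condition $\Re(\lambda_u)\le\Re(\lambda_v)$. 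Since $\lambda_u$ and $\lambda_v$ then lie on the same circle of radius $\lambda$ with arguments in $(0,\pi)$, where cosine is decreasing, this is equivalent to $\arg\lambda_u\ge\arg\lambda_v$. Hence the argument, and with it $\alpha$, can only jump downward at a switch.

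Putting the two pieces together finishes the argument: on every maximal interval of regular $p$ the rotation number decreases continuously, while at each boundary value where the subdominant pair changes it drops and never rises, so $\alpha(p)$ is non-increasing over the whole regular range. The remaining care is purely organizational: one must check that the only crossings that matter are those actually exchanging subdominant status, and that simultaneous or tangential crossings are absorbed by the same pairwise comparison. The real-eigenvalue regimes correspond to the boundary values $\alpha\in\{0,\tfrac12\}$ and fall outside the genuinely rotating case $\vartheta=\{\theta\}$ under consideration.
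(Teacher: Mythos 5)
Your proof is correct, and the way you handle the crux is genuinely different from the paper's. Both arguments share the same skeleton: the affine dependence $\lambda_v(p)=p+(1-p)S_v$, and a split into the case where the subdominant index is unchanged and the case where it switches. For the first case, your computation $\frac{d}{dp}\arg\lambda_v=-\Im(S_v)/|\lambda_v|^2<0$ is just the infinitesimal form of the paper's observation that $\lambda_h'$ lies on the segment from $\lambda_h$ to $1$ and hence has smaller argument; there the two proofs coincide in substance. The real divergence is the switch case. The paper argues by contradiction: it passes to the first modulus-crossing point $p_o$, applies Lemma~\ref{triangleLemma} (with a reparametrized weight) to conclude $\Re(\lambda_v)>\Re(\lambda_h)$, and then needs a planar-geometry argument---a pie slice forcing $\theta_h\geq n/4$, a slope condition, and a perpendicular bisector separating $O$ from $\lambda_h'$---to reach the contradiction. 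You instead work locally at each crossing: the sign change of $|\lambda_u|^2-|\lambda_v|^2$ forces a derivative inequality, and your identity $\frac{d}{dp}|\lambda_v|^2=\frac{2}{1-p}\bigl(\Re(\lambda_v)-|\lambda_v|^2\bigr)$, which is exactly the infinitesimal version of Lemma~\ref{triangleLemma} with $c=1$, collapses it to $\Re(\lambda_u)\leq\Re(\lambda_v)$, i.e., to a weakly downward jump of the argument on the common circle of radius $\lambda$. Your route buys a shorter, calculus-based proof that also covers transitions through real-eigenvalue regimes (the paper assumes $|\vartheta|=|\vartheta'|=1$ at both endpoints), and it shows monotonicity pointwise in $p$ rather than only for the two compared values. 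The bookkeeping you flag as ``organizational'' is indeed routine, but you should state the two facts it rests on: the subdominant class is constant on each component of $(1/N,1)\setminus\Lambda$, which follows from the proof of Theorem~\ref{AS-Ellipse} because two non-conjugate classes can share a modulus at only one value of $p$; and the claim for arbitrary regular $p<p'$ follows by chaining the strict in-interval decrease with the downward jumps at the finitely many crossings in between. With those two sentences added, your proof is complete.
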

\begin{proof}
We must have $|\vartheta|=1$.
Let $\lambda_h$ be (an) eigenvalue corresponding
to the unique angle in~$\vartheta$; recall that $0<\theta_h< n/2$.
As we replace $p$ by $p'>p$, we use the prime sign with all relevant
quantities post-substitution.
Thus, the subdominant eigenvalue for $p'$ associated with
$\vartheta'$ is denoted by $\lambda_v'$;
again, we assume that $|\vartheta'|=1$.
Note that $v$ might not
necessarily be equal to $h$; hence the case analysis:
\begin{itemize}
\item
$v=h$: 
Using the same notation for complex numbers and the points
in the plane they represent (Fig.~\ref{p-vs-speed}),
we see that $\lambda_h'$ lies in (the relative interior of) the segment $1\lambda_h$;
hence $\theta_h'<\theta_h$.
\item
$v\neq h$:
We prove that, as illustrated in Fig.~\ref{p-vs-speed},
all three conditions
$|\lambda_h|>|\lambda_v|$,
$|\lambda_h'|<|\lambda_v'|$, and  
$\theta_h< \theta_v' \leq n/2$,
cannot hold at the same time, which will establish
the theorem.
If we increase $q$ continuously from $p$ to $p'$, 
$\theta_h(q)$ decreases continuously.  (We use the argument $q$ to denote
the fact that $\theta_h$ corresponds to the eigenvalue defined with $p$ replaced by $q$.)
Since, at the end of that motion,
$|\lambda_h(q)|<|\lambda_v(q)|$, by continuity we have
$p_o < p'$, where $p_o= \min\{ q>p\,:\, |\lambda_h(q)| = |\lambda_v(q)|\}$.
To simplify the notation, we repurpose the use of the prime superscript
to refer to $p_o$ (eg, $p'=p_o$). So, we now have 
$|\lambda_h'| = |\lambda_v'|$ and $\theta_h< \theta_v'< \theta_v\leq n/2$.
It follows that (i) the point $\lambda_v$ lies in the pie slice of radius $|\lambda_h|$ 
running counterclockwise from $\lambda_h$ to $-|\lambda_h|$ on the real axis.
Also, because $|\lambda_h'| = |\lambda_v'|$ and $|\lambda_h|>|\lambda_v|$,
setting $c=1$ as before in Lemma~\ref{triangleLemma} shows that
(ii) $\Re(\lambda_v) > \Re(\lambda_h)$.\footnote{The keen-eyed observer will
notice that in the lemma we must plug in $(p_o-p)/(1-p)$ instead of $p$.}
Putting (i, ii) together shows that $\theta_h\geq n/4$ (as shown in Fig.~\ref{p-vs-speed}).
Consequently, 
the slope of the segment $\lambda_h \lambda_v$ is negative. Since
that segment is parallel to $\lambda_h' \lambda_v'$, the perpendicular
bisector of the latter has positive slope. Since that bisector
is above $\lambda_v'$ and $\Im(\lambda_v') \geq 0$, this implies that $0$
and $\lambda_h'$ are on opposite sides of that bisector; hence
$|\lambda_v'| < |\lambda_h'|$, which is a contradiction. \qedhere
\end{itemize}
\end{proof}

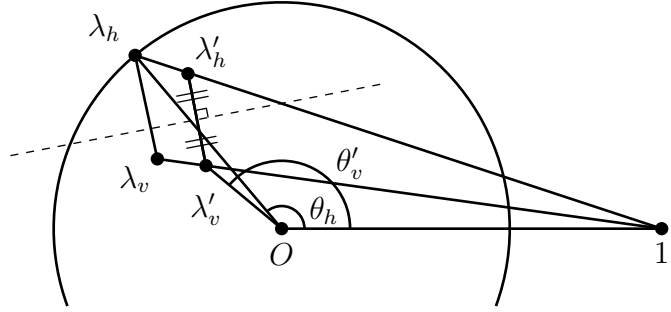
\begin{figure}[htb]
\centering
\usetikzlibrary{positioning}
\usetikzlibrary{calc}

\begin{tikzpicture}[line width=1pt]
    \def\circlesize{2.5pt}
    \def\nodefontsize{\Large}

    \draw (0, 0) + (-20:3) arc (-20:200:3);
    \fill (0,0) circle (\circlesize) node[below, inner sep=5pt, font=\nodefontsize] {$O$};
    \fill (5,0) circle (\circlesize) node[below, inner sep=5pt, font=\nodefontsize] {$1$};
    \draw (0, 0) -- (5, 0);

    \fill (0,0)+(130:3) circle (\circlesize) node[above left, inner sep=5pt, font=\nodefontsize] {$\lambda_h$};
    \draw (0,0)+(130:3) -- (0,0);
    \draw (0,0)+(130:3) -- (5,0);
    \draw (0, 0) + (0:0.3) node[above right, font=\nodefontsize, xshift=-2pt, yshift=-2pt] {$\theta_h$} arc (0:130:0.3);

    \fill (0,0)+(140:1.3) circle (\circlesize) node[below=5pt, inner sep=5pt, font=\nodefontsize] {$\lambda_v'$};
    \draw (0,0)+(140:1.3) -- (0,0);
    \draw (0, 0) + (0:0.9) node[above=15pt, font=\nodefontsize] {$\theta_v'$} arc (0:140:0.9);

    \draw (5,0)+(172.066:6.7) -- (5,0);
    \fill (5,0)+(172.066:6.7) circle (\circlesize) node[below left, inner sep=2pt, font=\nodefontsize] {$\lambda_v$};
    \draw ([shift={(172.066:6.7)}] 5,0) -- ([shift={(130:3)}] 0,0);
    \draw ([shift={(140:1.3)}] 0,0) -- ++(101:1.242);
    \fill ([shift={(140:1.3)}] 0,0) +(101:1.242) circle (\circlesize) node[above right, inner sep=2pt, font=\nodefontsize] {$\lambda_h'$};

    \coordinate (start) at ([shift={(140:1.3)}] 0,0);
    \coordinate (end) at ($(start) + (101:1.242)$);
    \coordinate (midpoint) at ($(start)!0.5!(end)$);
    \coordinate (perpendicular1) at ($(midpoint)!-2.5cm!90:(start)$);
    \coordinate (perpendicular2) at ($(midpoint)!2.5cm!90:(start)$);
    \draw[line width=0.5pt, dashed] (perpendicular1) -- (perpendicular2);

    \draw (start) -- node[sloped, font=\nodefontsize] {$\parallel$} (midpoint) -- node[sloped, font=\nodefontsize] {$\parallel$} (end);
    
    \coordinate (upperP) at ($(midpoint)!0.2!(end)$);
    \coordinate (rightP) at ($(midpoint)!0.06!(perpendicular2)$);
    \coordinate (upperrightP) at ($(rightP) + (upperP) - (midpoint)$);
    \draw[line width=0.5pt] (upperP)--(upperrightP)--(rightP);
\end{tikzpicture}
\caption{Why self-confidence slows down the dynamics: proof
by contradiction.}
\label{p-vs-speed}
\end{figure}

\subsection{Equidistributed orbits}\label{uniformOrbits}

The attractor $\mathbb{S}$ is the Minkowski sum of a number of ellipses
bounded by $|W|$. 
An agent orbits around an ellipse as it gets attracted to it
exponentially fast.
In a regular system with $\vartheta\neq \emptyset$,
its limiting orbit is periodic if the unique angle $\theta_h$ of $\vartheta$
is rational; it is quasi-periodic otherwise. 
In fact, it then forms a dense subset of the ellipse.  By~(\ref{limit-orb}), this follows
from Weyl's ergodicity principle~\cite{kuipersN}, which states that  
the set $\{\alpha t \pmod{1} ,|\, t\geq 0\}$ is uniformly distributed in $[0,1)$,
for any irrational~$\alpha$.

Dropping the regularity requirement
may produce more exotic dynamics. We exhibit instances
where a limiting orbit will not only be dense over the entire attracting set  
but, in fact, uniformly distributed. 
In other words, an agent will approach \emph{every possible opinion} with equal frequency.
This will occur when this property holds:\footnote{The coordinates of $a=(a_1,\ldots, a_k)$ 
are linearly independent over the rationals if $\mathbf{0}$ is the only rational vector 
normal to $a$.}

\begin{assumption} \label{lin-indep}
The numbers in $\vartheta\cup\{1\}$ are linearly
independent over the rationals. 
\end{assumption}

We explain this phenomenon next. 
Order the angles of $\vartheta$ arbitrarily and define the 
vector $\alpha = (\alpha_1,\ldots, \alpha_s)\in \big[0, \frac{1}{2}\big]^s$, where
$s= |\vartheta|$ and $\alpha_j= \theta/n$ for the $j$-th angle $\theta\in \vartheta$. 
We may assume that $c_v=d_v=\mathbf{0}$ in~(\ref{limit-orb}) since
these cases are rotationally trivial.  By Assumption~\ref{lin-indep}, 
$\mathbf{0}$ is the only integer vector 
whose dot product with $\alpha$ is an integer.
We use the standard notation $\|\alpha\|_{\,\mathbb{R}/\mathbb{Z}}
= \max_{k\leq s} \min_{a\in \mathbb{Z}}|\alpha_k -a|$.
By Kronecker's approximation theorem~\cite{cassels57},
for any $\beta\in [0,1]^{s}$ and any $\eps>0$,
there exists $q\in \mathbb{Z}$ such that $\|q\alpha - \beta \|_{\,\mathbb{R}/\mathbb{Z}}\leq \eps$.
It follows directly that, with high probability, any limiting orbit is dense 
over the attractor~$\, \mathbb{S}$.
We prove the stronger result:

\begin{theorem}\label{U-equid}
Under Assumption~\ref{lin-indep}, any limiting orbit is uniformly distributed over 
the attractor~$\, \mathbb{S}$.
\end{theorem}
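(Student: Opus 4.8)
The plan is to reduce the claim to the classical Weyl equidistribution theorem for linear flows on a torus. By~(\ref{limit-orb}), after discarding the rotationally trivial contributions $c_v x$ and $(-1)^t d_v x$, the limiting orbit is a fixed continuous function of the single time-dependent vector
\[
\Theta(t) = \big( t\alpha_1,\ldots, t\alpha_s \big) \pmod{1} \in \mathbb{T}^s,
\]
namely $y_v^*(t) = \Phi(\Theta(t))$, where $\Phi\colon \mathbb{T}^s\to \mathbb{R}^d$ sends $(\phi_1,\ldots,\phi_s)$ to $\sum_{j} \big(a_{v,\theta_j}\cos 2\pi\phi_j + b_{v,\theta_j}\sin 2\pi\phi_j\big)x$. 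Since $\Phi$ is a continuous (indeed smooth) map, it pushes forward the uniform (Haar) measure on $\mathbb{T}^s$ to a well-defined measure on $\mathbb{S}$, and this push-forward is exactly the measure with respect to which we want to show the orbit equidistributes. So the theorem will follow once I establish that the discrete-time sequence $\{\Theta(t)\}_{t\geq 0}$ is uniformly distributed on $\mathbb{T}^s$.

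The key step is to invoke Weyl's criterion for the sequence $\Theta(t)=t\alpha$. By the criterion, $\{t\alpha\}_{t\geq 0}$ is uniformly distributed on $\mathbb{T}^s$ if and only if for every nonzero integer vector $k=(k_1,\ldots,k_s)\in\mathbb{Z}^s$ the exponential sums satisfy
\[
\frac{1}{T}\sum_{t=0}^{T-1} e^{2\pi i\, t\,\langle k,\alpha\rangle} \longrightarrow 0 \quad \text{as } T\to\infty.
\]
This geometric sum tends to $0$ precisely when $\langle k,\alpha\rangle \notin \mathbb{Z}$, i.e.\ when $t\mapsto t\langle k,\alpha\rangle$ is irrational. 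Here Assumption~\ref{lin-indep} does the work: because the numbers in $\vartheta\cup\{1\}$ are linearly independent over $\mathbb{Q}$, and $\alpha_j=\theta_j/n$, the only integer vector $k$ with $\langle k,\alpha\rangle\in\mathbb{Z}$ is $k=\mathbf{0}$ (as already noted in the text just before the statement). Hence the Weyl sums vanish for all $k\neq\mathbf{0}$, and $\{t\alpha\}$ is equidistributed on $\mathbb{T}^s$.

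The final step is to transfer equidistribution from $\mathbb{T}^s$ through the continuous map $\Phi$ to the attractor $\mathbb{S}$. The standard fact is that if a sequence of points is uniformly distributed with respect to a measure $\nu$ and $\Phi$ is continuous, then the image sequence is uniformly distributed with respect to the push-forward $\Phi_*\nu$; concretely, for any bounded continuous test function $g$ on $\mathbb{S}$, one applies the uniform-distribution hypothesis to $g\circ\Phi$ (continuous on $\mathbb{T}^s$) to get
\[
\frac{1}{T}\sum_{t=0}^{T-1} g\big(y_v^*(t)\big)
= \frac{1}{T}\sum_{t=0}^{T-1} (g\circ\Phi)(t\alpha)
\longrightarrow \int_{\mathbb{T}^s} g\circ\Phi \; = \int_{\mathbb{S}} g \, d(\Phi_*\nu).
\]
Since this holds for all such $g$, the orbit equidistributes on $\mathbb{S}$ with respect to $\Phi_*\nu$, which is the asserted uniform distribution on the Minkowski sum of ellipses. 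The phrase ``with high probability'' enters only through Theorem~\ref{y(t)=thm}, which guarantees that the true trajectory $y_v(t)$ converges exponentially to $y_v^*(t)$; since a vanishing perturbation does not affect limiting empirical distributions, equidistribution of $y_v^*$ carries over to $y_v$.

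I expect the main obstacle to be the transfer step rather than the Weyl argument. The subtlety is that $\Phi$ need not be injective, so the ``uniform distribution on $\mathbb{S}$'' must be \emph{defined} as the push-forward measure $\Phi_*\nu$ rather than as normalized surface measure on $\mathbb{S}$; care is needed to argue this is the natural and intended notion, and to confirm that when the ellipses are in ``general position'' (which holds with high probability for Gaussian initial data) $\Phi$ is generically finite-to-one so that $\Phi_*\nu$ has full support and matches geometric intuition. A secondary technical point is making rigorous that the exponentially small discrepancy between $y_v(t)$ and $y_v^*(t)$ from Theorem~\ref{y(t)=thm} does not perturb the limiting distribution, which follows from a standard $\eps$-approximation argument using uniform continuity of test functions on the compact set $\mathbb{S}$.
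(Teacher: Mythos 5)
Your proof is correct, and it reaches the same mathematical heart as the paper's: under Assumption~\ref{lin-indep} the only integer vector $\ell$ with $\langle \ell,\alpha\rangle\in\mathbb{Z}$ is $\ell=\mathbf{0}$, so the exponential sums $\sum_{k\le t} e^{2\pi i \langle \ell, k\alpha\rangle}$ are geometric series with ratio $\gamma_\ell\neq 1$, hence bounded, and the sequence $k\alpha \pmod 1$ equidistributes on the torus. The difference is the wrapper around that estimate. You invoke the qualitative Weyl criterion (weak-$*$ convergence against continuous test functions), whereas the paper runs the same geometric-sum bound through the Erd\H{o}s--Tur\'an--Koksma inequality (its Lemma~\ref{ETK}), which controls the box discrepancy $D(S_t)$ directly and yields an explicit quantitative statement: for any $\delta>0$ one can read off how large $t$ must be (in terms of $L=\lceil 4s^2 3^{s+1}/\delta\rceil$ and $\sum_{0<\|\ell\|_\infty\le L}|1-\gamma_\ell|^{-1}$) for the discrepancy to drop below $\delta$. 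Since the paper \emph{defines} uniform distribution via discrepancy, the ETK route matches its definition with no translation needed; your route requires the standard (true) equivalence between the Weyl criterion and vanishing discrepancy. What your write-up adds that the paper omits: you make explicit the push-forward step through the parametrization $\Phi\colon\mathbb{T}^s\to\mathbb{S}$, including the honest caveat that ``uniform on $\mathbb{S}$'' must be read as $\Phi_*\nu$ rather than surface measure when $\Phi$ is not injective, and you note that the exponentially decaying gap between $y_v(t)$ and $y_v^*(t)$ from Theorem~\ref{y(t)=thm} does not disturb limiting empirical distributions. The paper silently identifies equidistribution of the angle sequence with equidistribution on $\mathbb{S}$, so your extra care there is a genuine improvement in rigor, at the cost of losing the explicit rate that ETK provides.
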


We mention that, in general, Assumption~\ref{lin-indep} might be difficult to verify analytically.
Empirically, however, density
is fairly obvious to ascertain numerically with suitable visual evidence~(Fig.~\ref{fig-equid}).
\begin{figure}[htb]
\centering
\includegraphics[width=9cm]{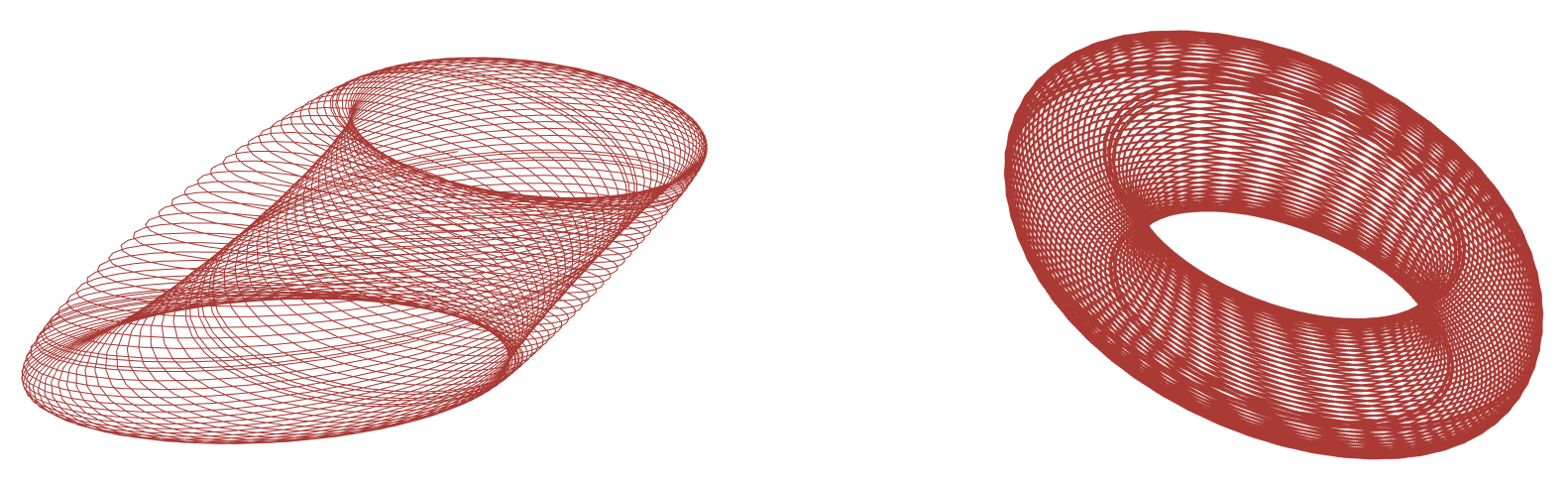}
\caption{Two examples where an agent approaches every point on its attractor
with equal frequency. In each case, the 
curve traces the orbit of the agent.}
\label{fig-equid}
\end{figure}
We define the discrepancy $D(S_t)$ of 
$S_t = (p_1, \ldots, p_t)$, 
with $p_i \in \mathbb{R}^s$, as
\[
D(S_t)= \sup_{B \in J} \Big|\, \frac{A(B;t)}{t} - \mu_s(B)\, \Big|,
\]
where $\mu_s$ is the $s$-dimensional Lebesgue measure
and $A(B;t)= |\{i\,|\, p_i\in B\}|$
and $J$ is the set of $s$-dimensional boxes of the form
$\prod_{i=1}^s \, [a_i, b_i)\subset [0,1]^s$.
The infinite sequence $S_\infty$ is said to be \emph{uniformly distributed}
if $D(S_t)$ tends to $0$, as $t$ goes to infinity.

\begin{lemma}\label{ETK}
{\sc{(Erd{\H o}s--Tur\'an--Koksma}~\cite{kuipersN},}
\text{\emph{page 116}}{\sc ).} \ \ 
For any integer $L>0$, 
\[
D(S_t)  \leq 2s^2 3^{s+1}
       \Big(\, \frac{1}{L} + \sum_{0 < \|\ell\|_{\infty} \leq L} \frac{1}{r(\ell)}\,
                     \Big| \frac{1}{t} \sum_{k=1}^{t} e^{ 2\pi i \langle \ell, p_k \rangle } \Big|\, \Big) \, ,
\]
where
$ r(\ell): = \prod_{j=1}^{s} \max\{1, |\ell_j|\}$
and $\ell = (\ell_1, \ldots, \ell_s) \in \mathbb{Z}^s$.
\end{lemma}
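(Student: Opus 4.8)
The plan is to prove the inequality by the classical route of sandwiching the indicator function of an arbitrary box $B=\prod_{j=1}^{s}[a_j,b_j)$ between two real trigonometric polynomials whose frequencies are confined to $\|\ell\|_\infty\le L$, and then reading the discrepancy off the Fourier side. First I would set up the one-dimensional building block: for a half-open arc $I\subseteq\mathbb{R}/\mathbb{Z}$ and an integer $L\ge 1$, there exist trigonometric polynomials $u_I^{+},u_I^{-}$ of degree at most $L$ with $u_I^{-}(x)\le\chi_I(x)\le u_I^{+}(x)$ for all $x$, with mean values $\int u_I^{\pm}=|I|\pm\tfrac{1}{L+1}$, and with Fourier coefficients satisfying $|\widehat{u_I^{\pm}}(h)|\le\tfrac{1}{L+1}+\tfrac{1}{\pi|h|}$ for $h\ne 0$. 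These are the Beurling--Selberg (Vaaler) majorant and minorant, obtainable by convolving $\chi_I$ against the Fej\'er kernel of order $L$ and adding an explicit correction that restores one-sided domination; nonnegativity of the Fej\'er kernel is what keeps $u_I^{+}\ge 0$.

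I would then lift to the box by tensorizing. Setting $U_B^{+}(\mathbf{x})=\prod_{j=1}^{s}u_{I_j}^{+}(x_j)$ produces a legitimate majorant $\chi_B\le U_B^{+}$, since each factor obeys $u_{I_j}^{+}\ge\chi_{I_j}\ge 0$, and its coefficients factor as $\widehat{U_B^{+}}(\ell)=\prod_j\widehat{u_{I_j}^{+}}(\ell_j)$, supported on $\|\ell\|_\infty\le L$. The minorant needs more care, because a product of one-dimensional minorants need not stay below $\chi_B$. The remedy is to expand $\chi_B=\prod_j\bigl(u_{I_j}^{+}-g_j\bigr)$ with nonnegative slacks $g_j:=u_{I_j}^{+}-\chi_{I_j}$, and to form $U_B^{-}$ from $\prod_j u_{I_j}^{+}$ by subtracting a suitable bound on the cross terms, giving a degree-$\le L$ polynomial with $U_B^{-}\le\chi_B$ whose coefficients are still controlled by the coordinatewise bounds above. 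This inclusion--exclusion expansion into $2^{s}$ terms is the delicate step and is the source of the exponential-in-$s$ constants.

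On the Fourier side I would write $S(\ell)=\tfrac1t\sum_{k=1}^{t}e^{2\pi i\langle\ell,p_k\rangle}$, so that $S(\mathbf 0)=1$ and $|S(\ell)|\le 1$. The majorant gives
\[
\frac{A(B;t)}{t}=\frac1t\sum_{k=1}^{t}\chi_B(p_k)\le\frac1t\sum_{k=1}^{t}U_B^{+}(p_k)=\sum_{\|\ell\|_\infty\le L}\widehat{U_B^{+}}(\ell)\,S(\ell),
\]
and $U_B^{-}$ gives the matching lower bound. The frequency $\ell=\mathbf 0$ contributes $\widehat{U_B^{+}}(\mathbf 0)=\prod_j\bigl(|I_j|+\tfrac{1}{L+1}\bigr)$, which differs from $\mu_s(B)=\prod_j|I_j|$ by at most $\tfrac{C_s}{L+1}$ after a multilinear expansion of the product. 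For $\ell\ne\mathbf 0$ I would split the coordinates into those with $\ell_j=0$ (each factor bounded by $|I_j|+\tfrac{1}{L+1}\le 1+\tfrac{1}{L+1}$) and those with $\ell_j\ne 0$ (each factor bounded by $\tfrac{1}{L+1}+\tfrac{1}{\pi|\ell_j|}$), so that $|\widehat{U_B^{+}}(\ell)|$ is at most a constant multiple of $1/r(\ell)$; summing over the $2^{s}$ coordinate splittings and over $0<\|\ell\|_\infty\le L$ reproduces exactly the weighted exponential-sum term of the statement. Combining the $\ell=\mathbf 0$ error with this sum, and arguing symmetrically for $U_B^{-}$, bounds $|A(B;t)/t-\mu_s(B)|$ by $2s^{2}3^{s+1}$ times the bracketed quantity uniformly in $B$; taking the supremum over $B\in J$ yields $D(S_t)$.

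The main obstacle is the construction and simultaneous control of the extremal one-dimensional majorant and minorant---the Beurling--Selberg problem---together with the correct tensorization of the minorant, since naive products of minorants fail to minorize. Once that framework is in place, tracking all the multiplicative and combinatorial factors through the $2^{s}$-fold coordinate split to land on the precise constant $2s^{2}3^{s+1}$ is routine but tedious bookkeeping.
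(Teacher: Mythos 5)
Note first that the paper does not prove this lemma at all: it is quoted as a black-box classical result from Kuipers--Niederreiter (the ``page 116'' citation), so the only meaningful comparison is with the proof in that reference. There, the Erd\H{o}s--Tur\'an--Koksma inequality is established by the Koksma--Sz\"usz method: the indicator of the box is approximated from both sides by smoothed indicators of slightly shrunk and slightly enlarged boxes (convolutions of indicators with elementary Fej\'er-type kernels), and the failure of pointwise one-sidedness is absorbed by comparing counts in the perturbed boxes; the constant $2s^2 3^{s+1}$ is what that bookkeeping produces. Your route is genuinely different: you invoke the extremal one-sided approximations of Beurling--Selberg--Vaaler, for which majorization and minorization hold \emph{pointwise} by degree-$L$ polynomials, so no box perturbation is needed, the $\ell=\mathbf{0}$ coefficient directly controls the measure defect, and the coefficient bound $\frac{1}{L+1}+\frac{1}{\pi|h|}$ yields the $1/r(\ell)$ weights after the coordinate split. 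This is essentially the Barton--Montgomery--Vaaler treatment of the multidimensional Erd\H{o}s--Tur\'an inequality; it is sharper (it lands comfortably under $2s^2 3^{s+1}$, so the stated bound follows a fortiori) at the price of importing the nontrivial Selberg construction, whereas the classical argument is self-contained with entirely elementary kernels.

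One step of your sketch needs repair, though it is fixable by a standard device. The slack $g_j := u_{I_j}^{+}-\chi_{I_j}$ is \emph{not} a trigonometric polynomial, so you cannot subtract ``a suitable bound on the cross terms'' involving the $g_j$ and remain inside degree $\leq L$; as written, $U_B^{-}$ is not a trigonometric polynomial and the Fourier-side comparison against $S(\ell)$ breaks down. The correct move is to dominate $g_j \leq u_{I_j}^{+}-u_{I_j}^{-}$, which \emph{is} a nonnegative trigonometric polynomial of degree at most $L$ with mean $\frac{2}{L+1}$, and to set $U_B^{-}=\prod_j u_{I_j}^{+}-\sum_j \bigl(u_{I_j}^{+}-u_{I_j}^{-}\bigr)\prod_{k\neq j} u_{I_k}^{+}$; the inequality $\prod_j(b_j-g_j)\geq \prod_j b_j-\sum_j g_j\prod_{k\neq j}b_k$, valid since $0\leq g_j\leq b_j=u_{I_j}^{+}$, then gives $U_B^{-}\leq \chi_B$ with all frequencies still confined to $\|\ell\|_\infty\leq L$. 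With that substitution your argument goes through and the remaining work is, as you say, bookkeeping.
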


\begin{proof}[Proof of \cref{U-equid}]
We form the sequence $p_1,\ldots, p_t \in [0,1)^s$ such that
$p_k= k \alpha \pmod{1}$;
where each coordinate of $k\alpha$ is replaced by its fractional part.
By Lemma~\ref{ETK}, its box discrepancy satisfies
\begin{equation}\label{Dt-ineq}
D(S_t) \leq 2s^2 3^{s+1}
       \Big(\, \frac{1}{L} + \sum_{0 < \|\ell\|_{\infty} \leq L} \frac{1}{r(\ell)}\,
                     \Big| \frac{1}{t} \sum_{k=1}^{t} e^{ 2\pi i \langle \ell, k \alpha \rangle } \Big|\, \Big),
\end{equation}
By Assumption~\ref{lin-indep}, $\mathbf{0}$ is the only integer vector 
whose dot product with $\alpha$ is an integer; hence
$\gamma_\ell:= e^{ 2\pi i \langle \ell,  \alpha \rangle }\neq 1$, for any $\ell\neq \mathbf{0}$.
It follows that
\[
\Big| \sum_{k=1}^{t} e^{ 2\pi i \langle \ell, k \alpha \rangle }\, \Big|
= \Big| \sum_{k=1}^{t} \gamma_\ell^k \,\Big|
= \Big| \frac{\gamma_\ell-\gamma_\ell^{t+1}}{1-\gamma_\ell} \Big|
\leq  \frac{2}{|1-\gamma_\ell|} \, .
\]
By~(\ref{Dt-ineq}), for any $\delta>0$,
\[
D(S_t) \leq 2s^2 3^{s+1}
       \Big(\, \frac{1}{L} + \frac{1}{t}
            \sum_{0 < \|\ell\|_{\infty} \leq L} \,
                     \frac{2}{|1-\gamma_\ell| } \, \Big) \leq \delta.
\]
for $L = \lceil 4s^2 3^{s+1}/\delta \rceil$ and
$t\geq (8/\delta) s^2 3^{s+1}  
          \sum_{0 < \|\ell\|_{\infty} \leq L}  |1-\gamma_\ell|^{-1}$.  
\end{proof}

\subsection{Examples}

We illustrate the range of contrarian opinion dynamics by considering
a few specific examples for which calculations are feasible.

\subsubsection{Fixed-point attractor}
 
Set $m=2$ and $C= \{(1,0), (0,1), (-1,0), (0,-1)\}$.
By Lemma~\ref{eigenvalues}, for any $v= (v_1,v_2)\in V$,
\[
\lambda_v= p +  \frac{1-p}{2} 
\Big(\cos \frac{2\pi v_1}{n} + \cos \frac{2\pi v_2}{n} \Big). 
\]
The eigenvalues are real and
$\lambda= \max_{v\in V}\{ |\lambda_v|<1\}=   p +  \frac{1}{2}(1-p)(1+ \cos 2\pi /n)$.
For any $h\in C$, $\lambda_h = \lambda$ and $\theta_h=0$; hence $C\subseteq W$.
A simple examination shows that, in fact, $W=C$.
By~(\ref{cv}, \ref{limit-orb}), given $j\in [d]$,\footnote{As usual, $[d]$ denotes $\{1,\ldots, d\}$.} 
\[
y_v^*(t)_j 
=  A_j\cos \frac{2\pi (v_1 + \alpha_j)}{n}  + B_j\cos \frac{2\pi (v_2 + \beta_j)}{n}\, ,
\]
where $A_j, B_j, \alpha_j, \beta_j$ do not depend on $v$ but only on the initial position $x$.
This produces a 2D surface in $\mathbb{R}^d$ formed by 
the Minkowski sum of two ellipses centered at the origin~(Fig.\ref{fig-ex1}).

\begin{figure}[htb]
\centering
\includegraphics[width=4cm]{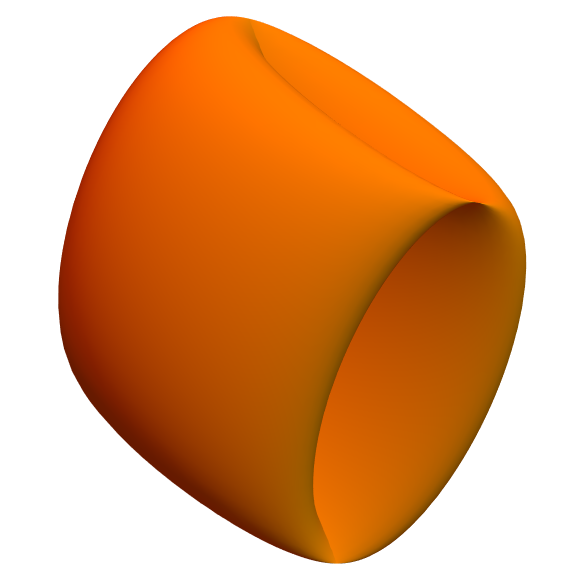}
\caption{The attractor on which each agent converges to a fixed point.}
\label{fig-ex1}
\end{figure}

\subsubsection{Periodic and quasi-periodic orbits}

Set $m=2$ and $C= \{(1,0), (0,1)\}$. 
By Lemma~\ref{eigenvalues}, for any $v\in V$, $
\lambda_v= p +  \frac{1-p}{2} 
\big( \omega^{v_1} + \omega^{v_2} \big)$;
hence
$\lambda = \max_{v\in V}\{ |\lambda_v|<1\}
    =  \frac{1}{2} \big| 1+p + (1-p)\omega \big|$
and $W = \{(1, 0), (0, 1), (-1, 0), (0, -1)\}$.
Specifically, $\lambda_v$ is equal to $\frac{1}{2} \big( 1+p + (1-p)\omega \big)$,
for $v\in \{(1,0), (0,1)\}$, and to its conjugate, for $v\in \{(-1,0), (0,-1)\}$.  
By~(\ref{ab_v}, \ref{limit-orb}), we have $\vartheta = \{\theta\}$, where
\[
\theta 
= \Big(\frac{n}{2\pi}\Big) \arctan \left( \frac{ (1-p)\sin 2\pi/n }{ 1+p + (1-p)\cos 2\pi/n } \right) \, ,
\]
and
\[
y_v^*(t) = 
\Big(\, a_{v,\theta} \cos \frac{2\pi \theta t}{n}  
   +   b_{v,\theta}  \sin  \frac{2\pi \theta t}{n} \, \Big) x \, .
\] 
Fix a coordinate $j\in [d]$; we find that
\[
y_v^*(t)_j 
=  A_j\cos \frac{2\pi (\theta t + v_1 + \alpha_j)}{n}  + B_j\cos \frac{2\pi (\theta t + v_2 + \beta_j)}{n}\, ,
\]
for suitable reals $A_j, B_j, \alpha_j, \beta_j$ that 
depend on the initial position $x$ but not on $v$.
This again produces a two-dimensional attracting subset of $\mathbb{R}^d$ 
formed by the Minkowski sum of two ellipses.
In the case of Figure~\ref{fig-ex2}, the attractor is a torus pinched along
two curves.
The main difference from the previous case comes from the 
limit behavior of the agents.
They are not attracted to a fixed point but, rather, to a surface.
With high probability, the orbits are asymptotically periodic if $\theta$ is rational,
and quasi-periodic otherwise. 
For a case of the former, consider $p=0$, which gives us
\[
\theta
=   \Big(\frac{n}{2\pi}\Big)  \arctan \left( \frac{ \sin 2\pi/n }{ 1+ \cos 2\pi/n } \right) = 
\frac{1}{2}\, ;
\]
hence periodic orbits.

\begin{figure}[htb]
\centering
\includegraphics[width=12cm]{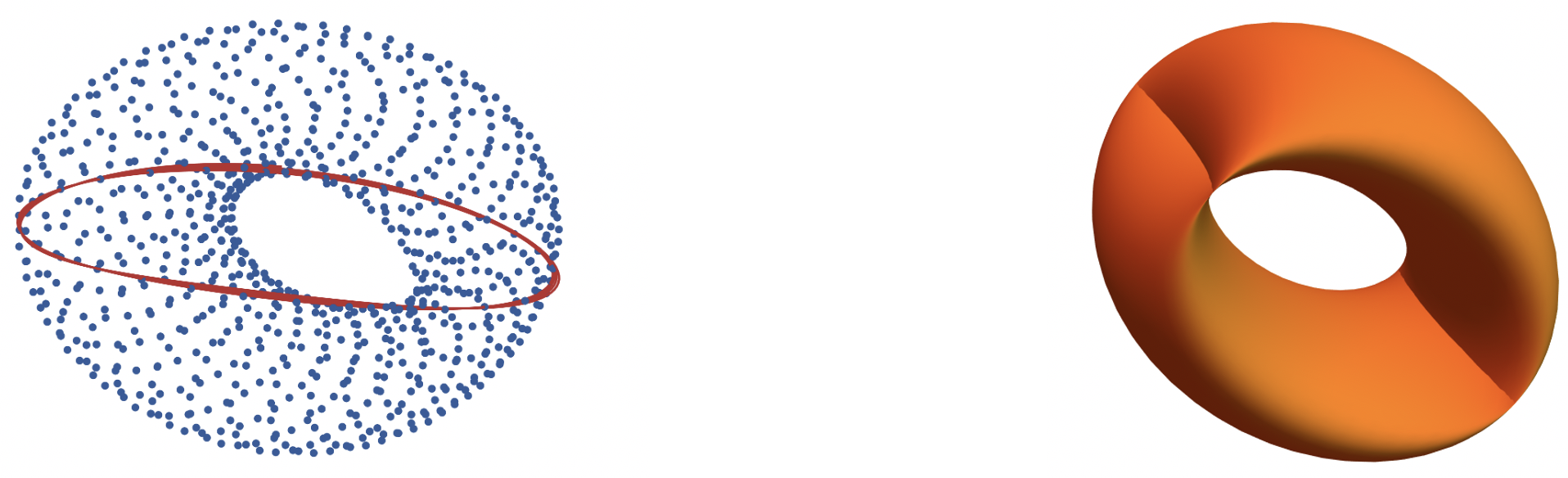}
\caption{A periodic orbit on the left with the full attractor on the right.}
\label{fig-ex2}
\end{figure}

\subsubsection{Equidistribution over the attractor}

Put $m=2$ and $C=\{(1,0), (0,1), (2,3)\}$. We set $p=1/4$.
For any $v\in V$, we have
\[
\lambda_v = p + \frac{1-p}{3}\big( \omega^{v_1} + \omega^{v_2} + \omega^{2v_1+3v_2}\big).
\]
We verified numerically that
$W= \{ (1,0),  (1,-1),  (-1,0),  (-1,1)\}$ and $\vartheta= \{\theta_1, \theta_2\}$, where
\begin{equation*}
\begin{cases}
\,\, \theta_1
=   \big(\frac{n}{2\pi}\big)  
     \arctan \left( \frac{ \sin 2\pi/n  + \sin 4\pi/n}{ 2 + \cos 2\pi/n + \cos 4\pi/n} \right) \\
\,\, \theta_2
=   \big(\frac{n}{2\pi}\big)  
     \arctan \left( \frac{ -\sin 2\pi/n }{ 1 + 3\cos 2\pi/n} \right) .
\end{cases}
\end{equation*}
By~(\ref{limit-orb}),
\[
y_v^*(t) = 
\sum_{k=1,2} \Big(\, a_{v,\theta_k} \cos \frac{2\pi \theta_k t}{n}  
   +   b_{v,\theta_k}  \sin  \frac{2\pi \theta_k t}{n} \, \Big) x \, .
\] 
Computer experimentation points to the linear independence
of the numbers $1, \theta_1, \theta_2$ over the rationals.
If so, then Assumption~\ref{lin-indep} from Section~\ref{uniformOrbits} holds
and, by Theorem~\ref{U-equid},
any limiting orbit is uniformly distributed over 
the attractor~$\, \mathbb{S}$~(Fig.\ref{fig-ex3}).

\begin{figure}[htb]
\centering
\includegraphics[width=4.5cm]{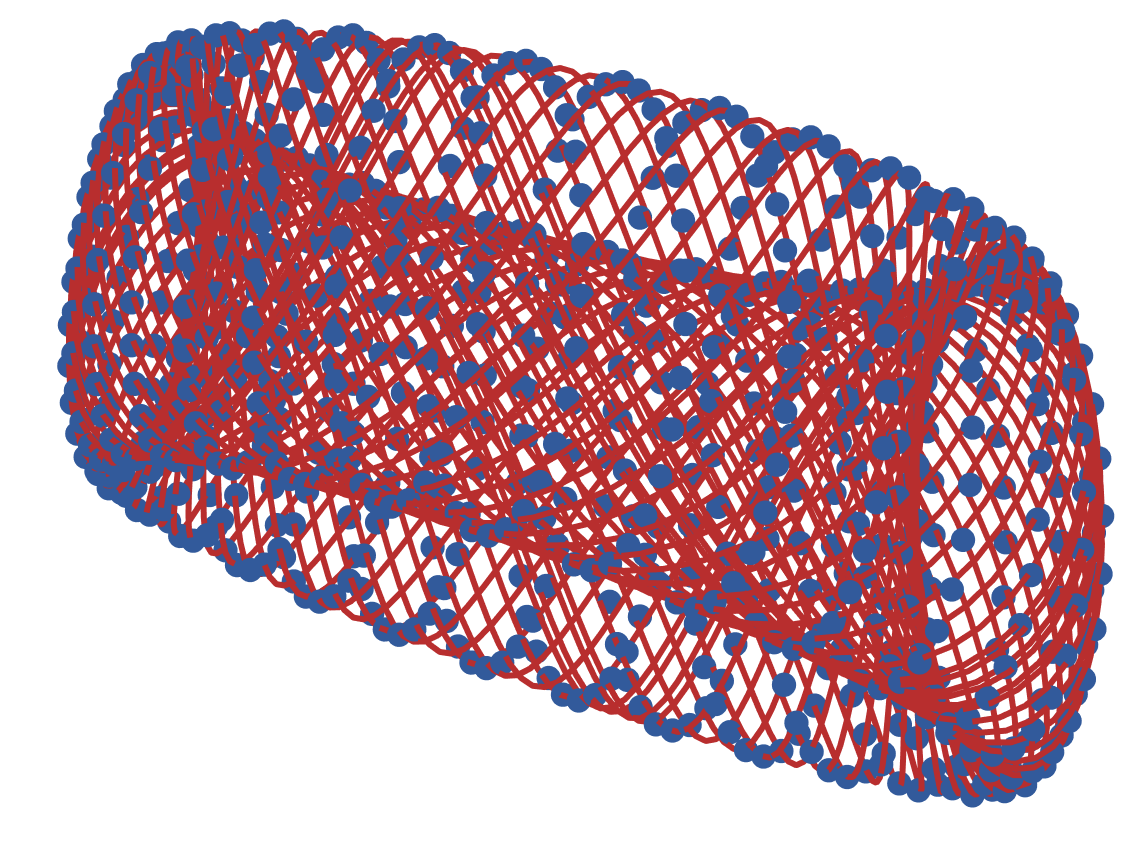}
\caption{A single agent's orbit is uniformly distributed around its attractor.}
\label{fig-ex3}
\end{figure}

\section{Dynamic Social Networks}

We define a \emph{mixed} model of contrarian opinion dynamics.
Let $\mathcal{M}=\{C_1, \ldots, C_s\}$ be a set of $s$ nonempty subsets,
each one spanning the vector space $V$. 
At each time step $t$, we define the matrix $F_C$ by
choosing, as convolution set $C$, a random,
uniformly distributed element of $\mathcal{M}$.
As before, we assume that $\mathbf{1}^\top x=0$.
Let $\lambda_{j,v}$ be the eigenvalue of $F_{C_j}$ associated with $v\in V$.
Given an infinite sequence $I_\infty$ of indices from $[s]$,
we denote
by $I_t={k_1,\ldots, k_t}$ be the first $t$ indices of $I_\infty$, 
and we write 
$\Lambda_v(I_t) =
    \prod_{k\in I_t }  \lambda_{k,v}$.
We generalize~(\ref{x(t)=}) into

\begin{equation}\label{x(t)=rand}
x(t)= \frac{1}{N} \sum_{v\in V\setminus \{\mathbf{0}\}} 
      \Lambda_v(I_t) \, \psi^v z_v \, ,
\end{equation}
where $z_v$ is the row vector $\sum_{u\in V} \omega^{-\langle v,u\rangle} x_u$.

\subsection{Spectral decomposition}

Write $\lambda_v^\times = \big|\prod_{j=1}^{s}  \lambda_{j,v}\big|^{1/s}$
and $\lambda= \max_{ v\in V\setminus \{\mathbf{0}\}} \lambda_v^\times$;
because all the eigenvalues other than $\lambda_{j,\mathbf{0}}=1$
lie strictly inside the unit circle, we have
$\lambda <1$.  Without loss of generality, we can assume that $\lambda>0$.
Indeed, suppose that $\lambda=0$; then, for every $v\in V\setminus \{\mathbf{0}\}$,
there is $j=j(v)$ such 
that $\lambda_{j,v}=0$.  This presents us with a ``coupon collector's'' scenario:
with probability at most $N(1-1/s)^t\leq Ne^{-t/s}$,
we have $\Lambda_v(I_t)\neq 0$ for at least one nonzero $v\in V$.
In other words, with high probability, every coordinate of $x(t)$ in the eigenbasis will vanish
after $O(s\log N)$ steps; hence $x(t)=0$ for all $t$ large enough.
This case is of little interest, so we dismiss it
and assume that $\lambda$ is positive.
We redefine $W=\{v\in V  \,|\, \lambda_v^\times =\lambda \}$.
Let $W'= \{v\in V  \,|\, \lambda_v^\times <\lambda \}$.

\begin{lemma}\label{ratio}
If $W'$ is nonempty, there exists $c<1$ such that, with high probability,
for all $t$ large enough,
\[
\max_{w'\in W'} |\Lambda_{w'}(I_t)|  \leq c^t  \min_{w\in W}|\Lambda_w(I_t)| .
\]
\end{lemma}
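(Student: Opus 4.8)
The plan is to pass to logarithms, where the products $\Lambda_v(I_t)$ become sums of bounded i.i.d.\ terms, and then invoke concentration of the empirical index frequencies. Writing $n_j(t)$ for the number of times index $j\in[s]$ is drawn among the first $t$ steps, we have $\log|\Lambda_v(I_t)|=\sum_{j=1}^s n_j(t)\,\log|\lambda_{j,v}|$ whenever all the $\lambda_{j,v}$ are nonzero. Since $\log\lambda_v^\times=\frac1s\sum_j\log|\lambda_{j,v}|$, the law of large numbers gives $\frac1t\log|\Lambda_v(I_t)|\to\log\lambda_v^\times$, and the lemma should follow from the spectral gap $\lambda':=\max_{w'\in W'}\lambda_{w'}^\times<\lambda$.

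First I would dispose of the degenerate indices. For $w\in W$ we have $\lambda_w^\times=\lambda>0$, so $\prod_j|\lambda_{j,w}|=\lambda^s>0$ forces every $\lambda_{j,w}\neq0$; hence all the logs are finite and $|\Lambda_w(I_t)|>0$ for every $t$, so in particular $\min_{w\in W}|\Lambda_w(I_t)|>0$. On the other side, an index $w'\in W'$ with $\lambda_{w'}^\times=0$ must have some $\lambda_{j,w'}=0$; by the coupon-collector estimate used above to handle $\lambda=0$ (the missing index $j$ fails to appear in $t$ draws with probability at most $(1-1/s)^t\le e^{-t/s}$, and a union bound over the at most $N$ such indices gives $\Lambda_{w'}(I_t)=0$ for all of them with probability $\ge1-Ne^{-t/s}$), we get $\Lambda_{w'}(I_t)=0$ for all large $t$ with high probability, so these $w'$ satisfy the claimed bound trivially. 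It remains to treat the finitely many $v\in W\cup W'$ with $\lambda_v^\times>0$, for which every $\log|\lambda_{j,v}|$ is finite and bounded in absolute value by some constant $M$.

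Next I would apply a concentration bound to the multinomial counts. By Hoeffding's inequality and a union bound over $j\in[s]$, $\Pr\big[\,\max_j|n_j(t)/t-1/s|>\delta\,\big]\le 2s\,e^{-2t\delta^2}$; since this is summable in $t$, summing the tail from a large $T$ onward (equivalently, Borel--Cantelli) shows that, with high probability, $|n_j(t)/t-1/s|\le\delta$ holds for all $t$ large enough and all $j$. On this event, for every relevant $v$,
\[
\Big|\tfrac1t\log|\Lambda_v(I_t)|-\log\lambda_v^\times\Big|
=\Big|\sum_j\big(\tfrac{n_j(t)}{t}-\tfrac1s\big)\log|\lambda_{j,v}|\Big|\le \delta\,s\,M,
\]
whence, for $w\in W$ and any $w'\in W'$ with $\lambda_{w'}^\times>0$,
\[
\tfrac1t\big(\log|\Lambda_{w'}(I_t)|-\log|\Lambda_w(I_t)|\big)
\le \log(\lambda'/\lambda)+2\delta sM.
\]

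Finally I would fix the constants. Choosing $\delta$ small enough that $2\delta sM\le\frac12\log(\lambda/\lambda')$ makes the right-hand side at most $\frac12\log(\lambda'/\lambda)<0$, so with $c:=(\lambda'/\lambda)^{1/2}<1$ we get $|\Lambda_{w'}(I_t)|\le c^t|\Lambda_w(I_t)|$ for all such $w'$ and all $w\in W$, for every large $t$, with high probability; taking the maximum over $w'$ and the minimum over $w$, and folding in the degenerate $w'$ (which contribute $0$), yields the lemma. In the edge case $\lambda'=0$ every $w'\in W'$ is degenerate and any $c\in(0,1)$ works. The only mild subtlety, and the step I would be most careful about, is the passage from pointwise concentration at each $t$ to a bound valid for all large $t$ simultaneously; this is precisely where the summability of the Hoeffding tails (or the strong law) is needed, rather than a single Chernoff estimate.
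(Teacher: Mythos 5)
Your proof is correct and follows essentially the same route as the paper's: both pass to logarithms, dispose of vanishing eigenvalues by the coupon-collector argument, exploit the spectral gap $\lambda'<\lambda$ via a Hoeffding-type bound whose tails are summable in $t$ (so the estimate holds for all large $t$ simultaneously), and arrive at the same constant $c=\sqrt{\lambda'/\lambda}$. The only cosmetic difference is that you concentrate the empirical index frequencies $n_j(t)/t$ and transfer linearly to $\log|\Lambda_v(I_t)|$, whereas the paper applies the deviation bound directly to the centered increments $\sigma_{k,v}=\log|\lambda_{k,v}|-\log\lambda_v^\times$; these are equivalent up to a fixed linear map.
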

Note that the high-probability event applies to \emph{all} times $t$ larger than a fixed constant.
The proof involves the comparison of two multiplicative random walks.
\begin{proof}
Fix $w\in W$ and $w'\in W'$. We prove that
$|\Lambda_{w'}(I_t)|  \leq c^t |\Lambda_w(I_t)|$.
If $\lambda_{w'}^\times =0$, then $\lambda_{j,w'}=0$, for some $j$.
With high probability, the sequence $I_t$ includes the index $j$ at least once
for any $t$ large enough; hence $|\Lambda_{w'}(I_t)| =0$ and the lemma holds.
Assume now that $\lambda_{w'}^\times > 0$;  for all $j$,
both of $\lambda_{j,w}$ and $\lambda_{j,w'}$ are nonzero.
Write $S_v(I_t)= \log \prod_{k\in I_t}|\lambda_{k,v}|$, for $v=w,w'$,
and note that 
$S_v(I_t)= t\log \lambda_v^\times + \sum_{k\in I_t}\sigma_{k,v}$,
where $\sigma_{k,v}= \log|\lambda_{k,v}| - \log \lambda_v^\times$.
Let $\sigma= \max_{k,v} |\sigma_{k,v}|$.
The random variables $\sigma_{k,v}$ are unbiased and mutually independent
in $[-\sigma, \sigma]$. Classic deviation bounds~\cite{AlonSpencer}
give us
$
\mathbb{P}\Big [\, \Big|\sum_{k\in I_t} \sigma_{k,v} \Big|>b \,\Big ]< 2e^{-b^2 / (2t \sigma^2)}.
$
It follows that
$\big| S_v(I_t) - t\log \lambda_v^\times \big|= O\big( \sigma \sqrt{t \ln (tN)}\, \big)$ 
with probability $1-a/(tN)^2$, for an arbitrarily small constant $a>0$.
Since $\sum_{t>0}1/t^2= \pi^2/6$, it follows that, for arbitrarily small fixed $\eps>0$
and all $t$ large enough,
with probability at least $1- \eps/N^2$, 
\[
\log\frac{ |\Lambda_{w}(I_t)| }{ |\Lambda_{w'}(I_t)| }
= S_w(I_t) - S_{w'}(I_t)
\geq  t \log\frac{ \lambda_w^\times }{ \lambda_{w'}^\times } 
     - O\big( \sigma \sqrt{t \log (tN)}\,\big)  
\geq \frac{t}{2} \log\frac{ \lambda_w^\times }{ \lambda_{w'}^\times } \, ,
\]
for any given $w\in W$ and $w'\in W'$.
Setting $c= \max_{w\in W, w'\in W'} 
\sqrt{ \lambda_{w'}^\times / \lambda_w^\times}$ 
and using a union bound
completes the proof.
\end{proof}

We define the scaled orbit $y(t)= x(t)/\lambda^t$. 
Reprising the argument from Theorem~\ref{y(t)=thm},
we conclude from~(\ref{x(t)=rand}) that, with high probability,
the limiting orbit is of the form
\begin{equation*}
y^*(t)= \frac{1}{N} \sum_{h\in W} 
      \Big(\prod_{k\in I_t }  \frac{\lambda_{k,h}}{\lambda}  \Big) \psi^h z_h
      =
         \frac{1}{N} \sum_{h\in W} 
               \Big( \prod_{k\in I_t }    \frac{|\lambda_{k,h}|}{\lambda} \, \Big) \,
                       \omega^{\sum_{k\in I_t} \! \theta_{k,h}}\, \psi^h z_h ,
\end{equation*}
where $\lambda_{k,h} :=  |\lambda_{k,h}| \omega^{\theta_{k,h}}$.
It follows that
\begin{equation*}
y_v^*(t) =
    \frac{1}{N} \sum_{h\in W}
     \Big( \prod_{k\in I_t }    \frac{|\lambda_{k,h}|}{\lambda} \, \Big) \,
      \omega^{ \sum_{k\in I_t} \! \theta_{k,h} + \langle h, v \rangle }
         \,  \sum_{u\in V}
                 \omega^{- \langle h, u \rangle } x_u \, .
\end{equation*}
If we put 
$X_h= \frac{2\pi}{n} \big( \langle h, v \rangle + \sum_{k\in I_t} \! \theta_{k,h}\big)$,
then, with $a_h$ and $b_h$ being the row vectors defined in Theorem~\ref{y(t)=thm},
\begin{equation}\label{y-rand-abx}
y_v^*(t) = \frac{1}{N}
\sum_{h\in W}
 \Big( \prod_{k\in I_t }    \frac{|\lambda_{k,h}|}{\lambda} \, \Big) \,
\Big( (a_hx)\cos X_h  +  (b_h x) \sin  X_h\Big).
\end{equation}

\subsection{Surprising attractors}

Adding mixing to a model increases the entropy of the system.
It is thus to be expected that the attractor of a mixed model
should have higher dimensionality than its pure components.
The surprise is that this need not be the case. We exhibit
instances of contrarian opinion dynamics where mixing
\emph{decreases} the dimension of the attractor.
To keep the notation simple, we consider two pure models 
$\mathcal{M}_1=\{C_1\}$,  $\mathcal{M}_2=\{C_2\}$
alongside their mixture $\mathcal{M}_3=\{C_1, C_2\}$.

\begin{theorem}\label{surprise1}
For any $k\in [m]$, there is a choice of $C_1$ and $C_2$ such that
$\mathrm{dim}\, \mathcal{M}_3 = k$ and
$\mathrm{dim}\, \mathcal{M}_1 = \mathrm{dim}\, \mathcal{M}_2 = m$;
in other words, the dimension of the mixture's attractor can be
arbitrarily smaller than those of its pure components.
\end{theorem}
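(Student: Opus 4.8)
The plan is to understand how the attractor dimension is determined by the set $W$ of subdominant eigenvectors, and then to engineer $C_1,C_2$ so that each pure model has a full-dimensional $W$ while their mixture has a sharply reduced one. Recall from Theorem~\ref{y(t)=thm} and equation~(\ref{AtElSum}) that the attractor $\mathbb{S}$ of a pure model is the Minkowski sum, over $h\in W$, of ellipses, so its dimension is governed by the cardinality (and linear-algebraic spread) of $W$. For the mixed model, the relevant quantity from the spectral decomposition in~(\ref{y-rand-abx}) is instead the geometric-mean modulus $\lambda_v^\times = |\prod_{j=1}^s \lambda_{j,v}|^{1/s}$, and $W=\{v : \lambda_v^\times = \lambda\}$ picks out those $v$ that are \emph{simultaneously} near-maximal \emph{on average} across both $C_1$ and $C_2$. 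The key mechanism to exploit is that a frequency vector $v$ can be subdominant for $C_1$ alone and subdominant for $C_2$ alone, yet have $\lambda_v^\times$ strictly below the mixed maximum because the two matrices dampen different frequencies: the product can be large only for the few $v$ that are large under both.

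The concrete construction I would attempt is as follows. Work with $V=(\mathbb{Z}/n\mathbb{Z})^m$ and choose $C_1,C_2$ to be two bases of $V$ (so Theorem~\ref{Cbasis} applies, with $n$ prime). By that theorem, for a single basis $C_j$ the subdominant set consists of the $2m$ frequencies $v$ for which $\langle v,h\rangle$ realizes a one-hot pattern $(\pm 1,0,\dots,0)$ across the basis $C_j$; these $2m$ frequencies span $V$, giving $\dim \mathcal{M}_1 = \dim \mathcal{M}_2 = m$. For the mixture, $\lambda_v^\times = \sqrt{|\lambda_{1,v}\lambda_{2,v}|}$ is maximized only at frequencies that are subdominant for \emph{both} bases simultaneously. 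By choosing the second basis $C_2$ to be a suitable linear image of $C_1$ (for instance, a change of basis by an invertible matrix $M$ over $\mathbb{Z}/n\mathbb{Z}$ that fixes a $k$-dimensional coordinate subspace but scrambles its complement), one arranges that the common subdominant frequencies span exactly a $k$-dimensional subspace, forcing $\dim\mathcal{M}_3 = k$. The one-hot characterization from Theorem~\ref{Cbasis} is what makes the overlap computable: a frequency is jointly extremal iff its one-hot pattern is preserved under $M^\top$, and fixing $k$ coordinates of that pattern gives exactly $2k$ surviving frequencies spanning a $k$-dimensional space.

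The main steps, in order, are: (i) fix $n$ prime and express $\lambda_{j,v}$ via Lemma~\ref{eigenvalues}; (ii) invoke Theorem~\ref{Cbasis} to pin down each pure $W$ and confirm $\dim\mathcal{M}_1=\dim\mathcal{M}_2=m$; (iii) write $\lambda_v^\times=\sqrt{|\lambda_{1,v}||\lambda_{2,v}|}$ and show this product is strictly maximized only on the intersection of the two one-hot frequency sets, where both factors equal the common modulus $|p+\frac{1-p}{m}(\omega-1)|$; (iv) design $M$ (hence $C_2=MC_1$) so that this intersection spans a $k$-dimensional subspace; and (v) conclude $\dim\mathcal{M}_3=k$ from~(\ref{y-rand-abx}) by identifying the dimension of the mixed attractor with the span of $W$. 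The hard part will be step (iii)/(iv): verifying that the \emph{product} of moduli has no spurious maximizers outside the intended intersection. It is not enough that each factor is individually maximized on its own one-hot set; I must rule out a frequency $v$ where neither factor is maximal but their product still attains $\lambda$. This requires a quantitative argument — likely a strict-convexity or AM--GM-type inequality showing that $|\lambda_{1,v}||\lambda_{2,v}|$ is strictly dominated unless $v$ lies in the joint extremal set — and careful choice of $p$ (or an exceptional set $\Lambda$ as in Theorem~\ref{AS-Ellipse}) to guarantee the gap is strict.
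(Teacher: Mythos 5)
Your plan is correct and is essentially the paper's own proof: the paper instantiates your ``linear image'' idea concretely, taking $C_1=(e_1,\ldots,e_m)$ and $C_2=(e_1,\ldots,e_k,2e_{k+1},\ldots,2e_m)$, so that $W_1=\pm C_1$, $W_2=\pm\{e_1,\ldots,e_k,2^{-1}e_{k+1},\ldots,2^{-1}e_m\}$, and the mixture's subdominant set is $W_1\cap W_2=\pm\{e_1,\ldots,e_k\}$, giving dimension $k$. One remark: the step you flag as ``the hard part'' is in fact immediate and needs no convexity inequality or restriction on $p$ --- since both pure systems share the same subdominant modulus $\lambda$, we have $|\lambda_{1,v}|\le\lambda$ and $|\lambda_{2,v}|\le\lambda$ for every $v\neq\mathbf{0}$, so $\lambda_v^\times=\sqrt{|\lambda_{1,v}|\,|\lambda_{2,v}|}=\lambda$ forces both factors to equal $\lambda$, i.e.\ $v\in W_1\cap W_2$, and no spurious maximizer can exist.
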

\begin{proof}
We define $C_1=  (e_1,\ldots, e_m)$ 
and $C_2= (e_1,\ldots, e_k, 2e_{k+1},\ldots, 2e_m)$, 
for any $k\in [m]$, 
where $e_i$ is the one-hot vector of $V$ whose $i$-th coordinate
is 1 and all the others 0.
Let $W_i$ denote the set $W$ corresponding to the system $\mathcal{M}_i$.
We easily verify that $W_1= \pm C_1$ and 
$W_2= \pm \big\{ e_1,\ldots, e_k,  2^{-1} e_{k+1},\ldots, 2^{-1}e_m \big\}$,
where $2^{-1}$ is the inverse of $2$ in the field $\mathbb{Z}/n \mathbb{Z}$.
A vector $v\in W_i$ and its negative contribute to the same ellipse, so
we have $\mathrm{dim}\, \mathcal{M}_1 = \mathrm{dim}\, \mathcal{M}_2 =m$.
We note that $|\lambda_{k,h}|= \lambda= \big|1 +\frac{1-p}{m}(\omega -1)\big|$,
for $h\in  W_1 \cup W_2$; hence
$\lambda_v^\times = \lambda$ for $h\in  W_1 \cap W_2$
and $\lambda_v^\times < \lambda$ for all other values of $h$.
It follows that  $\mathrm{dim}\, \mathcal{M}_3 = k$.
\end{proof}

Figure~\ref{low-mixattract} illustrates Theorem~\ref{surprise1}.
We have $m=2$ and $n= 29$.
The two convolution sets 
are $C_1=\{(1,0), (0,1)\}$ and $C_2=\{(1,0), (0,2)\}$.
The initial positions are random and identical
in all three cases.

\begin{figure}[htb]
\centering
\includegraphics[width=11cm, height=3cm]{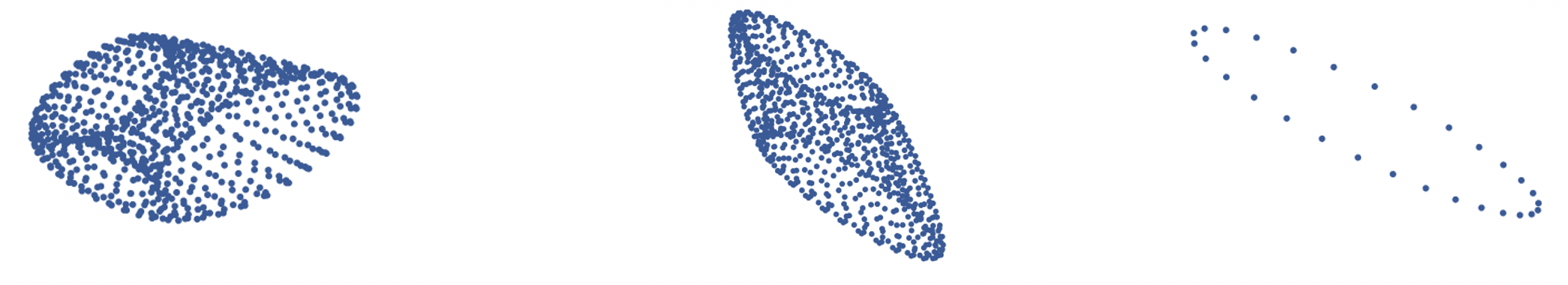}
\caption{The two attractors of the pure models
$\mathcal{M}_i$ ($i=1,2$) on the left,
with the lower-dimensional attractor of the mixture on the right.}
\label{low-mixattract}
\end{figure}

We can generalize the mixed model by picking $C_1$ (resp. $C_2$) 
with probability $1-q$ (resp. $q$), where $0\leq q\leq 1$.  For this, we redefine 
$\lambda_v^\times(q) = \big| \lambda_{1,v}^{1-q}\lambda_{2,v}^q \big|$
and $\lambda(q)= \max_{ v\in V\setminus \{\mathbf{0}\}} \lambda_v^\times(q)$.

\begin{theorem}\label{surprise2}
There is a choice of $C_1$ and $C_2$ such that
$\mathrm{dim}\, \mathcal{M}_3 > \mathrm{dim}\, \mathcal{M}_1 = \mathrm{dim}\, \mathcal{M}_2 = m$;
in other words, the dimension of the mixture's attractor can be
larger than those of its pure components.
\end{theorem}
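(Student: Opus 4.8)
The plan is to exhibit explicit convolution sets. Recall from~(\ref{limit-orb}) and~(\ref{y-rand-abx}) that the attractor of any model is a Minkowski sum of ellipses, one per conjugate pair $\{h,-h\}\subseteq W$, and that these ellipses are generically independent; hence $\dim\mathcal{M}$ equals the number of such pairs, i.e.\ $\frac12|W|$ when $W$ carries no real subdominant eigenvalue. For the uniform mixture $\mathcal{M}_3=\{C_1,C_2\}$ we have $\lambda_v^\times=\sqrt{|\lambda_{1,v}|\,|\lambda_{2,v}|}$, and $W_3$ is the set of $v\neq\mathbf{0}$ maximizing $\lambda_v^\times$. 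So it suffices to engineer $C_1,C_2$ with $|W_1|=|W_2|=2m$ but $|W_3|>2m$.

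First I would fix a composite $n$ admitting a scalar $a\in(\mathbb{Z}/n\mathbb{Z})^\times$ with $a^2\equiv 1$ and $a\neq\pm1$ (e.g.\ $n=99$, $a=10$), a suitable self-confidence weight $p$, and set $C_1=(e_1,\dots,e_m)$ and $C_2=(ae_1,\dots,ae_m)$. Since $a$ is a unit, both sets span $V$. By Lemma~\ref{eigenvalues}, $\lambda_{1,v}=p+\frac{1-p}{m}\sum_i\omega^{v_i}$ and $\lambda_{2,v}=p+\frac{1-p}{m}\sum_i\omega^{av_i}=\lambda_{1,av}$. For suitable $p$ a short computation shows that $|\lambda_{1,v}|$ is maximized over $v\neq\mathbf{0}$ exactly at the one-hot vectors, so $W_1=\pm\{e_1,\dots,e_m\}$ and, by the identity $\lambda_{2,v}=\lambda_{1,av}$, $W_2=\pm\{ae_1,\dots,ae_m\}$; as $a\neq\pm1$ these sets are disjoint and each has $2m$ elements, giving $\dim\mathcal{M}_1=\dim\mathcal{M}_2=m$.

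The key point is the involution symmetry. Because $a^2\equiv1$, the map $v\mapsto av$ is an involution of $V$, and $\log\lambda_v^\times=\frac12\bigl(\log|\lambda_{1,v}|+\log|\lambda_{1,av}|\bigr)$ is invariant under it; in particular it is constant on each orbit $\{e_i,ae_i,-e_i,-ae_i\}$, so $\lambda_v^\times$ takes one common value on all of $W_1\cup W_2$. If that value is the global maximum, then $W_3\supseteq W_1\cup W_2$ has $4m$ elements forming $2m$ conjugate pairs with the distinct frequencies $e_i,ae_i$; the associated row vectors $a_h,b_h$ of Theorem~\ref{y(t)=thm} are then linearly independent, so for the random placement $x$ the ellipses are independent with high probability and $\dim\mathcal{M}_3=2m>m$.

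The main obstacle is proving that this common value is indeed the global maximum, i.e.\ that $|\lambda_{1,v}|\,|\lambda_{1,av}|\le|\lambda_{1,e_1}|\,|\lambda_{1,ae_1}|$ for every $v\neq\mathbf{0}$. I would reduce this to two elementary facts about the product of the two moduli: that concentrating $v$ on a single coordinate beats any spread-out vector (each extra active coordinate rotates the partial sum away from the real axis and shrinks both factors), and that, within one coordinate, the balanced competitors---including the fixed points of $v\mapsto av$, namely the multiples of $n/\gcd(a-1,n)$---lose to the orbit of $e_1$, which already contains both the $|\lambda_{1,\cdot}|$-optimizer $e_1$ and the $|\lambda_{2,\cdot}|$-optimizer $ae_1$. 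Both facts are trigonometric inequalities in $\cos(2\pi v_i/n)$ and $\cos(2\pi a v_i/n)$ that I would verify directly for the chosen parameters (they hold comfortably for $n=99$, $a=10$, $p=\frac12$), which completes the proof.
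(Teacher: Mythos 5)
Your construction is genuinely different from the paper's, and numerically it appears to work; but as written it has a load-bearing gap at its crux. Everything up to the involution symmetry is sound: with $a^2\equiv 1$ and $a\neq\pm1$, the quantity $\lambda_v^\times=\sqrt{|\lambda_{1,v}|\,|\lambda_{1,av}|}$ is indeed constant on $W_1\cup W_2=\pm\{e_1,\dots,e_m\}\cup\pm\{ae_1,\dots,ae_m\}$, and \emph{if} that common value is the global maximum over $v\neq\mathbf{0}$, then $W_3\supseteq W_1\cup W_2$ and you get $2m$ conjugate pairs, hence dimension $2m>m$. But that global maximality is exactly the hard step, and you do not prove it: you defer it to ``two elementary facts'' to be ``verified directly for the chosen parameters.'' The comparison among one-hot vectors is a finite check for each fixed $m$ (note that $|\lambda_{1,ke_1}|$ depends on $m$, so even this must be redone for every $m$); the reduction from arbitrary $v$ to one-hot $v$, however, ranges over all $n^m-1$ nonzero vectors of $V$, and the theorem quantifies over every $m$, so ``verify for the chosen parameters'' is not a proof --- your stated reason (``each extra active coordinate rotates the partial sum away from the real axis and shrinks both factors'') is a heuristic, not an argument; note, e.g., that for $v=(1,1,0,\dots,0)$ one factor of the product can come \emph{closer} to $1$ than for $v=e_1$ depending on $p$ and $m$, so the claim really does need a lemma. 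Worse, your scheme is all-or-nothing: since $\lambda_v^\times$ is constant on $W_1\cup W_2$, either all of it lies in $W_3$ or none of it does, so without maximality you cannot even salvage the weaker conclusion $\dim\mathcal{M}_3\geq m+1$.

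It is instructive to see how the paper dodges precisely this difficulty. It takes $C_2=(2e_1,\dots,2e_m)$, allows a \emph{biased} mixture weight $q$, and uses only the disjointness of $W(0)=W_1$ and $W(1)=W_2$: since each $q\mapsto\lambda_v^\times(q)$ is continuous, there is a crossing value $q^*\in(0,1)$ at which some $w\notin W(0)$ ties the still-maximal common value (\ref{W(0)}) of $W(0)$, so $W(q^*)\supseteq W(0)\cup\{\pm w\}$ --- one extra pair obtained for free, with no global optimization ever computed. The cost is a non-constructive $q^*$ and only $\dim\geq m+1$; your symmetric construction, if completed, would buy an explicit uniform mixture ($q=\tfrac12$) and the stronger dimension $2m$. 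Two ways to close your gap: (i) prove the one-hot reduction as a genuine lemma, e.g.\ by writing $\log|\lambda_{1,v}|=\log\bigl|1-\tfrac{1-p}{m}\sum_i(1-\omega^{v_i})\bigr|$ and showing that the per-coordinate losses in $\log|\lambda_{1,v}|+\log|\lambda_{1,av}|$ accumulate (this is plausible --- to first order in $1/m$ the sum is additive over coordinates with each nonzero coordinate contributing a strictly negative term --- but it requires controlling the second-order cross terms uniformly in $m$); or (ii) keep your $C_1,C_2$ and simply run the paper's intermediate-value argument on them, since disjointness of $W_1$ and $W_2$ is all it needs; that yields Theorem~\ref{surprise2} immediately, at the price of abandoning $q=\tfrac12$ and the $2m$ bound.
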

\begin{proof}
Borrowing the notation of the previous proof,
we define $C_1=  (e_1,\ldots, e_m)$ and $C_2=  (2e_1,\ldots, 2e_m)$
and verify that $W_1= \pm C_1$ and $W_2= \pm \{2^{-1} e_1,\ldots, 2^{-1}e_m \big\}$;
hence $\mathrm{dim}\, \mathcal{M}_1 = \mathrm{dim}\, \mathcal{M}_2 = m$.
Assuming that $n>3$, we note that the sets $W_1$ and $W_2$ are disjoint.
Regarding the mixed system, we have
$W(q)= \{v\in V\,:\, \lambda_v^\times(q) =  \lambda(q)\}$, where $W(0)= W_1$ and $W(1)= W_2$.
Around $q=0$, we have, for all $v\in W(0)$,
\begin{equation}\label{W(0)}
\lambda_v^\times(q) =
    \Big| \,1 +\frac{1-p}{m}(\omega -1)\, \Big|^{1-q}
        \times   \Big|\, 1 +\frac{1-p}{m}(\omega^2 -1) \, \Big|^q \, .
\end{equation}
Since $W(0)\neq W(1)$, by continuity, there are $q\in (0,1)$ 
and $w\in  W(q)\setminus W(0)$ such that
$\lambda_w^\times(q)$ is equal to the right-hand side of~(\ref{W(0)}).
This implies that $W(q)\supseteq W(0)\cup \{w\}$,
which completes the proof.
\end{proof}

Figure~\ref{high-mixattract} illustrates the theorem.
We have $m=2$, $n= 29$, and $p=0.9$.
The two convolution sets 
are $C_1=\{(1,0), (0,1)\}$ and $C_2=\{(2,0), (0,2)\}$;
the mixture probability is $q= 0.0306$.  
The initial positions are random and identical
in all three cases.

\begin{figure}[htb]
\centering
\includegraphics[width=10cm]{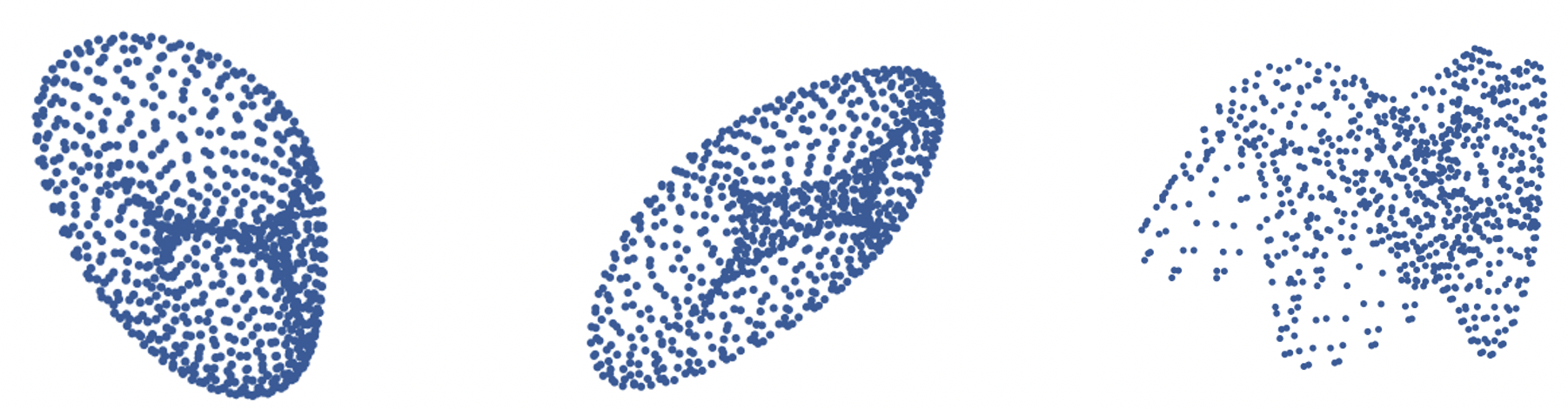}
\caption{The two attractors of the pure models
$\mathcal{M}_i$ ($i=1,2$) on the left,
with the higher-dimensional attractor of the mixture on the right.}
\label{high-mixattract}
\end{figure}


\newpage
\bibliography{ref}

\end{document}